\title{Clock Synchronization in Virtualized Distributed Real-Time Systems using IEEE\,802.1AS and ACRN}
\author{
	\href{https://orcid.org/0000-0002-0305-7139}{\includegraphics[scale=0.06]{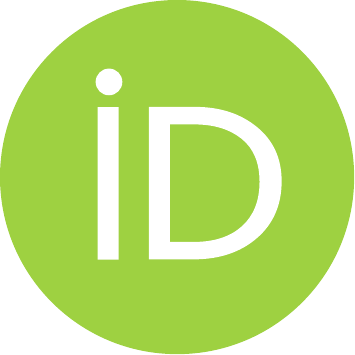}\hspace{1mm}Jan Ruh} \\
	TTTech Computertechnik AG\\
	Vienna, Austria\\
	\texttt{jan.ruh@tttech.com} \\
	\And
	\href{https://orcid.org/0000-0003-0652-5943}{\includegraphics[scale=0.06]{orcid.pdf}\hspace{1mm}Wilfried Steiner} \\
	TTTech Computertechnik AG\\
	Vienna, Austria\\
	\texttt{wilfried.steiner@tttech.com} \\
	\And
	\href{https://orcid.org/0000-0001-6162-2653}{\includegraphics[scale=0.06]{orcid.pdf}\hspace{1mm}Gerhard Fohler} \\
	TU Kaiserslautern\\
	Kaiserslautern, Germany\\
	\texttt{fohler@eit.uni-kl.de} \\
}
\begin{document}
	
\newtheorem{property}{Property}
\newtheorem{definition}{Definition}
\newtheorem{proof}{Proof}
\newtheorem{theorem}{Theorem}
\numberwithin{equation}{section}
\renewcommand{\theequation}{\arabic{section}.\arabic{equation}}

\maketitle

\begin{abstract}
Virtualization of distributed real-time systems enables the consolidation of mixed-criticality functions on a shared hardware platform, thus easing system integration. Time-triggered communication and computation can act as an enabler of safe hard real-time systems. A \emph{time-triggered hypervisor} that activates virtual CPUs according to a global schedule can provide the means to allow for a resource-efficient implementation of the time-triggered paradigm in virtualized distributed real-time systems. A prerequisite of time-triggered virtualization for hard real-time systems is providing access to a global time base to VMs and the hypervisor. A global time base results from clock synchronization with an upper bound on the \emph{clock synchronization precision}.

We present a formalization of the notion of time in virtualized distributed real-time systems. We use this formalization to propose a \emph{virtual clock condition} that enables us to test the suitability of a virtual clock for the design of virtualized time-triggered real-time systems focusing on clock synchronization. We discuss and model how virtualization, particularly resource consolidation versus resource partitioning, degrades clock synchronization precision. Finally, we apply our insights to model the IEEE~802.1AS clock synchronization protocol and derive an upper bound on the clock synchronization precision of IEEE~802.1AS in a virtualized distributed real-time system. We present our implementation of a \emph{dependent clock} for ACRN that can be synchronized to a grandmaster clock. The results of our experiments illustrate that a type-1 hypervisor like ACRN implementing the dependent clock paradigm yields native clock synchronization precision. Furthermore, we show that the upper bound of clock synchronization precision derived from our model holds for a series of experiments featuring native and virtualized setups.
\end{abstract}

\section{Introduction}
The consolidation of mixed-criticality functions on multi-core processor (MCP) hardware platforms has become of significant interest in many industries with applications demanding hard real-time requirements, such as aerospace, industrial automation, and automotive.

Virtualization can achieve strict spatial and temporal isolation of co-located mixed-critical functionality. It guarantees isolation of multiple partitions or virtual machines (VMs) that share a hardware platform by introducing an additional software layer, the so-called hypervisor. Existing real-time hypervisors come with different properties depending on their requirements with respect to, e.g., real-time capabilities, safety, and configurability. We differentiate resource consolidating hypervisors that target more flexible soft real-time systems, such as RT-Xen~\cite{DBLP:conf/emsoft/XiWLG11,DBLP:conf/emsoft/XiXLPGSL14}, and resource partitioning hypervisors that target safety-critical, static, hard real-time systems, such as ACRN~\cite{DBLP:conf/vee/LiXRD19}, Xtratum~\cite{DBLP:conf/edcc/CrespoRM10}, Jailhouse\footnote{\url{https://github.com/siemens/jailhouse}}, QNX\footnote{\url{https://blackberry.qnx.com/en/software-solutions/embedded-software/industrial/qnx-hypervisor}} or PikeOS\footnote{\url{https://www.sysgo.com/products/pikeos-hypervisor/}}, to name only a few. Both approaches can utilize mechanisms to mitigate timing interference due to \emph{implicitly shared resources} such as last level caches (LLCs)~\cite{DBLP:conf/rtas/XuTPCL17} or the main memory bus~\cite{DBLP:conf/rtas/YunYPCS13}. Furthermore, resource consolidating hypervisors might apply theoretical insights from hierarchical scheduling~\cite{DBLP:conf/ecrts/DengLS97,DBLP:conf/rtss/KuoL99,DBLP:conf/rtss/ShinL03,DBLP:conf/rtss/DavisB05,DBLP:conf/emsoft/XiWLG11,DBLP:conf/emsoft/XiXLPGSL14} and response time analysis~\cite{DBLP:conf/rtcsa/BalbastreRC09,DBLP:conf/emsoft/AlmeidaP04,DBLP:conf/dac/BeckertNEP14} to give timing guarantees to VMs that share processor cores.

However, both resource consolidation and partitioning come with drawbacks and trade-offs. Resource consolidating hypervisors have to trade-off their real-time capabilities for resource utilization. Partitioning hypervisors lack scalability since they are fundamentally bound by the available physically partitionable hardware, such as CPU cores or network interface cards (NICs). Temporal partitioning of a CPU core by applying time division multiple access (TDMA) or time-triggered scheduling is often applied in safety-critical domains. It allows sharing a processor core~\cite{DBLP:conf/dac/BeckertNEP14,DBLP:conf/dac/BeckertE15,DBLP:conf/edcc/CrespoRM10} or a communication channel~\cite{obermaisser2011} while providing deterministic timing. Not a lot of attention has been paid to problems arising from the interplay of virtualized real-time systems, time-triggered communication (i.e., time-triggered messages), and time-triggered computation (i.e., time-triggered tasks). A central problem, e.g., in automotive domain controllers, guarantees the dispatching of time-triggered tasks according to a global communication and computation schedule. If time-triggered tasks are located in a VM, this implies that the VM must run at the exact point in time a time-triggered task is to be dispatched. We acknowledge the possibility of dynamic non-time-triggered scheduling of a VM's virtual CPUs (vCPUs) with dynamic or static priorities so that time-triggered tasks inside the VMs can follow a static dispatch schedule, e.g., by deriving suitable VM deadlines from the static combined task and network schedule. However, this results in a complex hierarchical scheduling problem~\cite{DBLP:conf/rtss/ShinL03} and makes it hard to argue about the timing properties of a hypervisor and its VMs, e.g., for worst-case response time, especially on multi-core platforms. Therefore, we propose the concept of a \emph{time-triggered hypervisor} that dispatches vCPUs according to a global static schedule, thus preserving the timing properties of time-triggered tasks and communication.

A prerequisite of time-triggered communication and computation is that all nodes of a distributed real-time system share a time base. We presume a distributed real-time system in which every node has access to a local physical clock that consists of an oscillator and a counter that increments with the oscillator's frequency. The oscillator comes with a nominal frequency that gives the granularity of the clock that is the least measurable time interval. Even if given a distributed real-time system that consists of homogeneous nodes that all have been started at the same time and run with the same nominal frequency, their local physical clocks will drift apart due to factors such as manufacturing variations, temperature, or aging of the oscillator~\cite{DBLP:conf/sigmetrics/SchmidCFCS08}. Therefore, the nodes execute a clock synchronization protocol that allows them to synchronize their local physical clocks to each other by exchanging local time information. As a result, they can establish a global time base with an a priori known \emph{clock synchronization precision}. A typical value for the precision of clock synchronization that allows for the execution of distributed control algorithms is $1\,\mu s$ or lower. The IEEE\,802.1AS clock synchronization standard also advertises a precision of below one microsecond for networks with up to six hops~\cite{ieee8021as}. The introduction of virtualization into distributed real-time systems to create logical partitions, e.g., partitioning domain controllers to consolidate virtual ECUs, raises the question of how to synchronize the clocks of VMs achieving a sub-microsecond clock synchronization precision and how to synchronize the hypervisor's local physical clock so it can dispatch vCPUs according to a global schedule.

We identify two dimensions to this problem. Firstly, the technical challenge is implementing clock synchronization in a virtualized system and achieving sub-microsecond clock synchronization precision. For example, Broomhead~et~al.~\cite{DBLP:conf/osdi/BroomheadCRV10} identify overhead associated with running one instance of the Network Time Protocol~(NTP) per VM and refer to this as the \emph{independent clock paradigm}. They propose an alternative implementation based on a \emph{feed-forward} mechanism that resorts to a \emph{dependent clock paradigm}. A single VM executes NTP and shares the synchronized time with co-located VMs. The authors of QuartzV~\cite{DBLP:conf/rtas/DSouzaR18} pick up on the dependent clock paradigm for virtualizing real-time systems and introduce the concept of \emph{timelines} whereas each timeline provides a different quality of time to a VM. The timelines' clock synchronization is based on a Linux implementation~\cite{cochran:2010,cochran:2011} of the IEEE\,1588-2008~\cite{ieee1588-2008} clock synchronization standard. In a para-virtualized setup of their implementation of the dependent clock paradigm for Linux KVM in which the host OS performs clock synchronization, they achieve an accuracy of $26.12 \pm 5.12\mu s$. In a fully virtualized setup, in which a VM executes the clock synchronization protocol, they measure an accuracy of $70.23\pm 128.28 \mu s$. These reported results are significantly higher than the precision of less than one microsecond that is a common magnitude for this parameter in industry. Furthermore, the existing work~\cite{DBLP:conf/osdi/BroomheadCRV10,DBLP:conf/rtas/DSouzaR18} does not consider issues that might arise if synchronizing the local physical clock of the hypervisor to other physical clocks in a distributed real-time system with the intent to use it to implement time-triggered scheduling of vCPUs.

The second dimension of the problem is a lack of a rigorous formalization of clock synchronization in virtualized environments. Formalization enables analyzing the properties of clock synchronization and deriving a verifiable upper bound for the achievable clock synchronization precision. Formalization is crucial for a time-triggered hypervisor that shall be used in safety-critical components of a time-triggered real-time system since the precision of clock synchronization is an essential parameter for time-triggered schedule synthesis~\cite{DBLP:conf/rtss/Steiner10,DBLP:conf/isorc/Steiner11,DBLP:journals/rts/CraciunasO16}.

The contributions of this work are:
\begin{itemize}
	\item We propose three properties that a virtual clock must provide to fulfill a \emph{virtual clock condition}. The virtual clock condition is a necessary but not sufficient condition for a virtual clock to be considered a \emph{good clock} according to Kopetz~\cite{DBLP:journals/tc/KopetzO87,DBLP:books/sp/Kopetz11}.
	\item We provide a thorough analysis of clock synchronization in virtualized distributed real-time systems identifying how hypervisor latency degrades clock synchronization precision.
	\item We discuss the dependent clock paradigm and, based on our analysis of clock synchronization in virtualized systems, argue why we expect it to yield near-native clock synchronization precision. Furthermore, we provide an implementation of a dependent clock for the ACRN hypervisor.
	\item We utilize our formalization of clock synchronization in virtualized distributed real-time systems to derive an upper bound for the clock synchronization precision of IEEE\,802.1AS~\cite{ieee8021as}.
	\item We present an experimental evaluation of our dependent clock implementation for ACRN and the upper bound on clock synchronization precision of IEEE\,802.1AS. To that end, we describe an elaborate test setup featuring three commercial edge computing devices. Furthermore, we can use our insights of how virtualization degrades clock synchronization precision to identify and account for hypervisor latency when measuring clock synchronization precision in virtualized systems.
\end{itemize}

The remainder of the paper is structured as follows: In Section~\ref{sec:notion_of_time}, we propose a formalization of the notion of time and clock synchronization in virtualized systems that follows the work of Kopetz~\cite{DBLP:journals/tc/KopetzO87,DBLP:books/sp/Kopetz11}. We propose the virtual clock condition and introduce the concepts of \emph{continuous} and \emph{discontinuous} virtual clocks illustrating fundamental constraints when virtualizing real-time systems. In Section~\ref{sec:clock_sync_in_virtualized_distributed_rt_systems}, we illustrate how virtualization degrades the precision of clock synchronization in distributed real-time systems. We continue in Section~\ref{sec:analysis_ieee802.1as_virtualized_distributed_system} by analyzing clock synchronization in virtualized distributed real-time systems deriving an upper bound on the achievable clock synchronization precision of IEEE\,802.1AS~\cite{ieee8021as}. Finally, in Section~\ref{sec:evaluation}, we present our dependent clock implementation for the ACRN hypervisor and show the applicability of our formal model to real-life systems. To that end, we compare the clock synchronization precision derived from our model to the clock synchronization precision measured in an experimental setup of a virtualized distributed real-time system for various hypervisor configurations ranging from resource consolidation to resource partitioning.

\section{Notion of Time in Virtualized Distributed Real-Time Systems}
\label{sec:notion_of_time}
We base our model of time in virtualized distributed real-time systems on the work of Kopetz~\cite{DBLP:journals/tc/KopetzO87,DBLP:books/sp/Kopetz11} and follow along with his definitions in order to transfer his insights on clock synchronization to virtualized distributed real-time systems\footnote{A summary of the used notation can be found in Appendix~\ref{app:notation_summary}.}. Kopetz uses the concepts of an \textit{event} and a \textit{duration} to keep track of time. An event is a happening at a definite point in time that itself does not take any time, whereas a duration is the time between two events. Furthermore, we presume the existence of a reference clock\footnote{Note, that such a reference clock does not exist -- it rather is a theoretical construct that aids the construction of a clock model.} $C^r$ with an undefined, yet sufficiently small granularity \textit{$g^r$}, thus minimizing the discreteness of the time base. We presume that the reference clock is in perfect synchronization with the international atomic time TAI~\cite{bipm2018}. This means that reading the reference clock at any point in time yields exactly the state of the TAI at that point in time. The microticks of the reference clock represent the smallest notion of the passage of time, so the observation of an event $e$ can be assigned a timestamp $ts(e)$ that is the state of the reference clock at its latest microtick. As a result, two or more events following each other between two consecutive microticks of the reference clock appear to occur at the same point in time.

We presume an encapsulated distributed system with a finite number of physical nodes $|N|>1$. Each node $k\in \mathcal{N}$ comes with a \textit{local physical clock} $C^k$ providing a \textit{local time} $tl^k$. We denote the event of a microtick $l\in\mathbb{N}_0$ of $C^k$ with $tl_l^k$ and the timestamp of an event $e$ in the local time with $tl^k(e)=l$. The event's timestamp is given by the state of the local physical clock at its latest microtick $l$. We note that a microtick is always merely a counter value without a unit. The time unit always comes in when combining the counter value with the granularity of the clock. The granularity $g^k$ of a local clock is given by the nominal number of microticks of the reference clock between two consecutive microticks of the local clock. Kopetz defines the drift $r^k_{l,l+1}$ of a physical clock as the ratio of the number of microticks of the reference clock between two consecutive microticks $l$ and $l+1$ of a local physical clock and its nominal granularity $g^k$:
\begin{align}
r^{k}_{l,l+1}=\frac{|ts(tl^{k}_{l+1})-ts(tl^{k}_{l})|}{g^{k}}
\end{align}
Furthermore, he defines a \emph{good clock} as a clock $C^k$ whose drift~$r_{l,l+1}$ is bounded by a maximum drift rate $r_{max}$.
\begin{definition}
	\label{def:max_drift_rate}
	A local physical clock $C^k$ of a node $k$ is a good clock if:
	\begin{align}
	\forall k \text{ and }\forall l\in\mathbb{N}_0: 1-r_{max}\le r^k_{l,l+1}\le 1+r_{max}
	\end{align}
\end{definition}
An upper bound on the drift rate of all clocks in a distributed real-time system is a precondition necessary in order to provide guarantees for the precision of clock synchronization.

\subsection{Virtual Clocks}
\label{ssec:virtual_clocks}
In addition to the work of Kopetz~\cite{DBLP:journals/tc/KopetzO87,DBLP:books/sp/Kopetz11}, we also consider that there can be a finite number $|\mathcal{V}^k|\geq0$ of VMs running on each physical node $k\in \mathcal{N}$. We refer to a node that is hosting VMs as a host node. A VM has access to a set of virtual resources mapped to physical resources by the \emph{hypervisor} running on the host node. Note that this way, our analysis of clock synchronization can cover both resource consolidating and resource partitioning hypervisors. We can model resource partitioning as one-to-one virtual to physical resource mappings, making it a special case of resource consolidation. Having this in mind, we will proceed with analyzing clock synchronization in virtualized distributed real-time systems, presuming a resource consolidating hypervisor. In particular, when saying resource consolidating hypervisor, what we mean is that the hypervisor schedules VMs vCPUs on the available pCPUs, meaning that a VM's execution can get preempted by the hypervisor depending on its implemented scheduling policy. However, resource consolidation can also include virtual networking or the sharing of other I/O devices.

Now, in alignment with the definitions of Kopetz, we introduce a \emph{local virtual clock} $C^{k,j}$ that, given the local physical clock $C^k$, grants a VM $j$ access to a local virtual time $tv^{k,j}$.
\begin{definition}
	A local virtual clock $C^{k,j}$ provides a local virtual time $tv^{k,j}$. We define the microtick of a virtual clock $C^{k,j}$ as a function $\nu^{k,j}:\mathbb{N}_0\rightarrow \mathbb{N}_0$ of the microtick $l$ of the local physical clock $C^k$ of the host node $k$. We refer to the function $\nu^{k,j}$ as the virtual microtick of the local virtual clock $C^{k,j}$ whereas $tv^{k,j}_{\nu(l)}$ denotes the event of the $\nu(l)$-th microtick of virtual clock $C^{k,j}$. The timestamp $tv^{k,j}(e)$ of an event $e$ in the virtual time of a VM $j$ running on a host node $k$ is given by the state of the virtual clock at its latest virtual microtick $\nu^{k,j}$.
\end{definition}
If considering an arbitrary VM on an arbitrary host node we leave out the superscript indices $k,j$ for better readability, e.g., writing $tv(e)=\nu(l)$.

\subsection{The Virtual Clock Condition}
When trying to apply Kopetz definitions of a clock drift rate and a good clock on virtual clocks, we find that not all virtual clocks are suited for a virtualized real-time system. First of all, we notice that the definition of a physical clock's granularity $g^k$ as the nominal number of microticks of the reference clock between two consecutive microticks of the local clock does not apply to an arbitrary virtual clock. For instance, given $n\in\mathbb{N}$ consecutive virtual microticks $\nu(l),\nu(l)+1,...,\nu(l)+n$ it is impossible to determine the nominal number of microticks of the reference clock in between two consecutive virtual microticks due to the possibility of an occurring preemption. Thus the question arises of how to deal with the progression of virtual time during preemption. Related to this is that a time measurement using a virtual clock might yield an error-prone result due to the possibility of invoking the hypervisor during the time measurement period. As a result, we propose two fundamental properties a virtual clock must have to be suited for use in a virtualized time-triggered real-time system. We can show that a virtual clock with said properties can always be considered a good clock according to Kopetz~\cite{DBLP:books/sp/Kopetz11} making it suitable for use in virtualized time-triggered real-time systems.
\begin{property}[Succession Property]
	\label{prop:succession_property}
	A virtual clock $C^{k,j}$ fulfills the succession property if the virtual microtick $\nu(l)$ increases by one for every microtick of the local physical clock $C^k$:
	\begin{align}
	l_0\in\mathbb{N}\text{ and } l\ge l_0: \nu(l+1)=\nu(l)+1 \nonumber
	\end{align}
\end{property}
\begin{property}[Simultaneity Property]
	\label{prop:simultaneity_property}
	A virtual clock $C^{k,j}$ fulfills the simultaneity property if the virtual microtick $\nu(l)$ and the microtick of the local physical clock $C^k$ occur at the exact same time:
	\begin{align}
	l_0\in\mathbb{N}\text{ and } l\ge l_0: ts(tv_{\nu(l)})=ts(tl_l) \nonumber
	\end{align}
\end{property}
\begin{theorem}[Virtual Clock Condition]
	\label{theorem:virtual_clock_condition}
	If a virtual clock $C^{k,j}$ fulfills the succession Property~\ref{prop:succession_property}, the simultaneity Property~\ref{prop:simultaneity_property}, and the local physical clock $C^k$ it is derived from is a good clock according to Definition~\ref{def:max_drift_rate}, then the local virtual clock $C^{k,j}$ itself is a good clock as well.
\end{theorem}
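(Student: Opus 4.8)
The plan is to reduce the drift rate of the virtual clock $C^{k,j}$ directly to the drift rate of the physical clock $C^k$ and then invoke the hypothesis that $C^k$ is a good clock. The first step is to make sense of the granularity of $C^{k,j}$ at all, since — as the discussion preceding the theorem stresses — for an arbitrary virtual clock the nominal number of reference microticks between two consecutive virtual microticks is ill defined because of possible preemptions. I would argue that the succession Property~\ref{prop:succession_property} and the simultaneity Property~\ref{prop:simultaneity_property} jointly remove exactly this ambiguity: succession puts consecutive virtual microticks into one-to-one correspondence with consecutive physical microticks (for $l\ge l_0$), and simultaneity forces corresponding microticks to occur at the same reference-clock state. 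Hence the nominal number of reference microticks between two consecutive virtual microticks coincides with that between the corresponding physical microticks, so the granularity of $C^{k,j}$ is well defined and equal to $g^k$, and the drift rate $r^{k,j}_{\nu(l),\nu(l)+1}$ of $C^{k,j}$ is defined by the same formula as $r^k_{l,l+1}$ with $g^k$ in the denominator.

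The second step is a short computation of the elapsed reference time between two consecutive virtual microticks. Fixing $l\ge l_0$, succession gives $\nu(l)+1=\nu(l+1)$, hence $tv_{\nu(l)+1}=tv_{\nu(l+1)}$; simultaneity applied at $l$ and at $l+1$ gives $ts(tv_{\nu(l)})=ts(tl_l)$ and $ts(tv_{\nu(l+1)})=ts(tl_{l+1})$. Combining these yields
\[
\bigl|ts(tv_{\nu(l)+1})-ts(tv_{\nu(l)})\bigr| = \bigl|ts(tl_{l+1})-ts(tl_l)\bigr|.
\]
The third step is then pure substitution into the definition of the drift rate: using $g^{k,j}=g^k$ and the identity just derived,
\[
r^{k,j}_{\nu(l),\nu(l)+1} = \frac{\bigl|ts(tv_{\nu(l)+1})-ts(tv_{\nu(l)})\bigr|}{g^{k,j}} = \frac{\bigl|ts(tl_{l+1})-ts(tl_l)\bigr|}{g^k} = r^{k}_{l,l+1}.
\]
Since $C^k$ is a good clock, $1-r_{max}\le r^k_{l,l+1}\le 1+r_{max}$ for all $l$; therefore $1-r_{max}\le r^{k,j}_{\nu(l),\nu(l)+1}\le 1+r_{max}$ for all $l\ge l_0$, which is precisely Definition~\ref{def:max_drift_rate} applied to $C^{k,j}$, restricted to the virtual microticks at or beyond $\nu(l_0)$ — the only regime in which the two properties, and hence $C^{k,j}$ itself, are specified.

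I expect the only genuine subtlety — and thus the main obstacle — to be the argument in the first step that the granularity of $C^{k,j}$ is well defined and equals $g^k$; once that is pinned down, the remainder is bookkeeping with $ts(\cdot)$. In particular I would want to spell out what ``nominal number of reference microticks'' should mean for a virtual clock and verify that Properties~\ref{prop:succession_property} and~\ref{prop:simultaneity_property} genuinely eliminate the preemption-induced ambiguity that motivated the virtual clock condition in the first place. A secondary remark worth making explicit is the role of $l_0$: the two properties are assumed only for $l\ge l_0$, so the good-clock conclusion for $C^{k,j}$ is likewise an eventual property, consistent with how the virtual clock is introduced.
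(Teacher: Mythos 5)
Your proposal is correct and follows essentially the same route as the paper's own proof: establish $g^{k,j}=g^k$ from the two properties, apply succession and then simultaneity to equate the timestamp differences, conclude $r^{k,j}_{\nu(l),\nu(l)+1}=r^{k}_{l,l+1}$, and invoke the good-clock bound on $C^k$. Your phrasing is in fact slightly cleaner, since you conclude directly from the equality of drift rates rather than via the paper's (unnecessary and not-quite-right) preliminary claim that it would suffice for the virtual drift rate to be \emph{lower or equal} to the physical one.
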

\begin{proof}
	Given that the local physical clock $C^k$ is a good clock, the virtual clock $C^{k,j}$ is a good clock as well if and only if its drift rate $r^{k,j}_{\nu(l),\nu(l+1)}$ is lower or equal the physical clock's drift rate $r^{k}_{l,l+1}$. Therefore, we have to show that for $l_0\in\mathbb{N}\text{ and } l\ge l_0$:
	\begin{align}
	\label{eq:virtual_drift_rate}
	\frac{|ts(tv^{k,j}_{\nu(l+1)})-ts(tv^{k,j}_{\nu(l)})|}{g^{k,j}} \le \frac{|ts(tl^{k}_{l+1})-ts(tl^{k}_{l})|}{g^{k}}
	\end{align}
	The succession property and the simultaneity property imply that every virtual microtick of the virtual clock $C^{k,j}$ corresponds to a microtick of the local physical clock $C^{k}$. As a result, the granularity $g^{k,j}$ of the virtual clock $C^{k,j}$ must be equal to the granularity $g^k$ of the local physical clock $C^k$. If we further successively apply the succession and the simultaneity property to the drift rate of the virtual clock in Equation~(\ref{eq:virtual_drift_rate}), we get:
	\begin{align}
	& & \frac{|ts(tv^{k,j}_{\nu(l+1)})-ts(tv^{k,j}_{\nu(l)})|}{g^{k,j}} & \le \frac{|ts(tl^{k}_{l+1})-ts(tl^{k}_{l})|}{g^{k}} \nonumber \\
	& \stackrel{Prop.\,\ref{prop:succession_property}}{\iff} & \frac{|ts(tv^{k,j}_{\nu(l)+1})-ts(tv^{k,j}_{\nu(l)})|}{g^{k,j}} & \stackrel{Prop.\,\ref{prop:simultaneity_property}}{=} \frac{|ts(tl^{k}_{l+1})-ts(tl^{k}_{l})|}{g^{k}} \nonumber
	\end{align}
	From $r^{k,j}_{\nu(l),\nu(l+1)}=r^{k}_{l,l+1}$ follows directly that the virtual clock $C^{k,j}$ is a good clock.
\end{proof}

For the remainder of this paper, we will focus on a local virtual time derived from the local physical time by applying a \emph{clock offset} to the microtick $l$ of the physical local clock $C^k$. Note that the value of the clock offset can theoretically change with each microtick of the host node's local physical clock. Therefore, we introduce a function $o(l)$ that returns the clock offset of a virtual clock at the microtick $l$ of the host node. The virtual microtick of a local virtual clock is then given by $\nu(l)= l-o(l)$. We want to point out that the value returned by the virtual microtick, just as a physical microtick, is a counter value and not a time value. An operating system~(OS) uses the counter value of a virtual or physical clock and its granularity to derive the actual time.

The clock offset $o(l)$ at a microtick $l$ of the host node depends on implementing the virtual to local clock mapping of a VM in the hypervisor.

\subsection{Continuous and Discontinuous Clocks}
\label{ssec:continuous_discontinuous_clocks}
As mentioned before, in general, we assume that a hypervisor schedules VMs vCPUs on the available pCPUs, meaning that a VM's execution can get preempted by the hypervisor depending on its implemented scheduling policy. We refer to the event of the $x$-th preemption of a VM on a host node as a \emph{preempt} event denoted with $preempt_x$. Accordingly, we refer to resuming the execution of a VM after the $x$-th preemption as a \emph{VM resume} event denoted with $resume_{x}$.

There are two ways to deal with the progression of virtual time during preemption, either the virtual clock keeps ticking during preemption, or the virtual clock is stopped and continues ticking once the VM resumes. In Figure~\ref{fig:preemption_discontclock}, we can see a virtual clock that keeps ticking during preemption causing the VM to miss several virtual microticks. The VM perceives a hole in its virtual time, jumping from the virtual microtick value $\nu(l)=l$ to the value $\nu(l+\delta+1)=l+\delta+1$. We refer to this as a \emph{discontinuous virtual clock}~(DVC) since the VM experiences a discontinuous virtual time. The clock offset $o(l)$ of the virtual microtick $\nu(l)$ of a DVC only accounts for a constant value relating to the difference between the start time of the VM and the host node. This difference is equal to the timestamp $tl(e_{start})$ of a VM's start event $e_{start}$ in the local time of its host node. For the sake of simplicity, we do not show this in Figure~\ref{fig:virtual_drift_preemption}.

\begin{figure}
	\centering
	\subfigure[Clock keeps ticking during preemption.]{
		\includegraphics[width=0.9\textwidth,trim={0.7cm 0.1cm 0.2cm 0.1cm},clip]{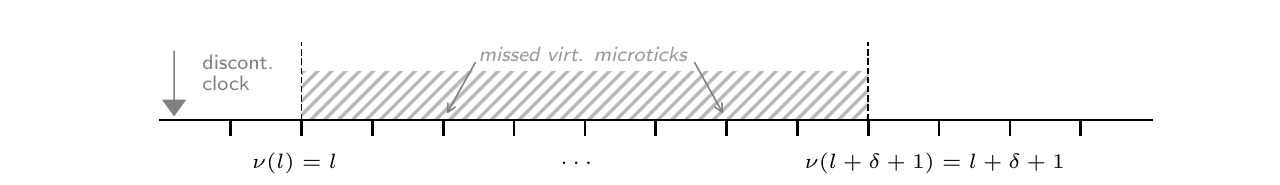}
		\label{fig:preemption_discontclock}
	}
	\subfigure[Clock is stopped during preemption.]{
		\includegraphics[width=0.9\textwidth,trim={0.7cm 0.1cm 0.2cm 0.1cm},clip]{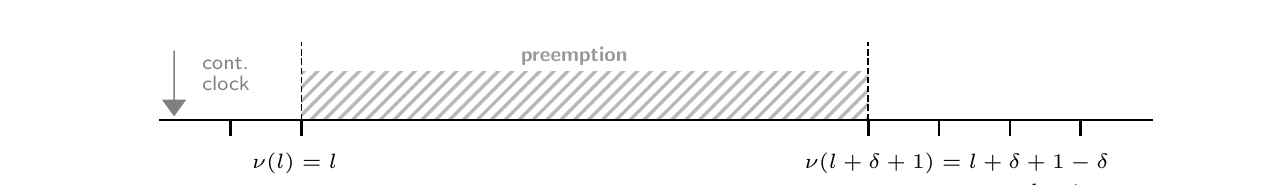}
		\label{fig:preemption_contclock}
	}
	\subfigure[Physical Local Clock and Reference Clock.]{
		\includegraphics[width=0.9\textwidth,trim={0.7cm 0.4cm 0.2cm 0.25cm},clip]{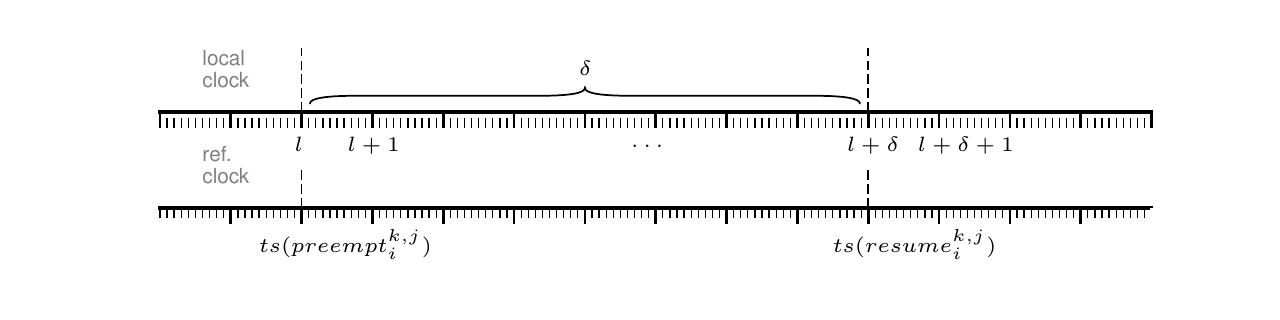}
		\label{fig:preemption_localclock}
	}
	\caption{Comparing continuous and discontinuous virtual clocks in case of preemption. In (a) we show a virtual clock that keeps ticking during preemption thus missing virtual microticks yielding a discontinuous virtual time. In (b) on the other hand we show a virtual clock that is stopped during preemption yielding a continuous virtual time.}
	\label{fig:virtual_drift_preemption}
\end{figure}

In contrast to Figure~\ref{fig:preemption_discontclock}, the virtual clock is stopped during preemption in Figure~\ref{fig:preemption_contclock}. Therefore, the VM perceives a continuous virtual time provided by a \emph{continuous virtual clock} (CVC). The CVC accounts for the duration of preemptions so that the virtual time progresses without jumps when the hypervisor resumes the VM. Let $P_l$ be a VM's set of preemptions until microtick $l$ of its host node. Furthermore, $p_i\in P_l$ denotes the duration of the $i$-th preemption of a VM in microticks of its host node's local physical clock. We can derive the duration of the $i$-th preemption the difference between the timestamps of the respective VM $preempt$ and VM $resume$ events in the time of the local physical clock, as shown in Figure~\ref{fig:preemption_localclock}:
\begin{align}
\delta:=p_i=tl(resume_i)-tl(preempt_i)
\end{align}
We add up the duration of all preemptions to obtain a value for the clock offset $o(l)$ of the CVC up to the $l$-th microtick of its host node. This leads us to the following distinction of cases for the clock offset of DVCs and CVCs.
\begin{definition}
	The clock offset $o(l)$ of a virtual clock $C^{k,j}$ from the local physical clock $C^k$ of its host node at microtick $l$ is given by:
	\begin{align}
	\label{eq:virtual_clock_offset}
	o(l)=\begin{cases}tl(e_{start}), & \text{discontinuous clock} \\ tl(e_{start}) + \sum^{|P_l|-1}_{i=0}p_i, & \text{continuous clock}\end{cases}
	\end{align}
\end{definition}

With these specific definitions of two virtual clock implementations that differ in how they handle preemptions by the hypervisor, we will check their suitability for use in virtualized real-time systems by testing for the succession and simultaneity properties. We consider a virtual clock a good clock if it fulfills both properties.

In contrast to what the name suggests, a DVC indeed does fulfill the succession property since for $l_0 = tl(e_{start}), l\ge l_0$ it is:
\begin{align}
\nu(l+1)=l-l_0+1=\nu(l)+1
\end{align}
The mismatch between the naming of the DVC and its properties results from whether we consider the progression of virtual time $tv$ or local physical time $tl$. From a VM's point of view, the progression of time is discontinuous since the VM perceives a hole in virtual time. However, from the host's point of view, the progression of virtual time follows exactly the progression of local physical time, just with an offset of $l_0$. This is illustrated in Figure~\ref{fig:discontinuous_virtual_time}. For the simultaneity property to hold, we have to make assumptions on implementing the virtual to physical time mapping. Therefore, we assume that adding the offset $o(l)$ to the microtick $l$ is performed when the VM reads the virtual clock, which conforms with the implementation in modern processor architectures, e.g., the timestamp counter (TSC) in the x86 architecture\cite{intel2020}. As a result, the virtual microtick and the microtick of the host's physical clock are the same events entailing the simultaneity property. Assuming that the local physical clock $C^k$ is a good clock, this implies that any DVC $C^{k,j}$ based on that local physical clock is a good clock as well.

\begin{figure*}
	\subfigure[Progression of discontinuous virtual time.]{
		\includegraphics[width=0.45\textwidth,trim={0.0cm 1.6cm 0.0cm 0.5cm},clip]{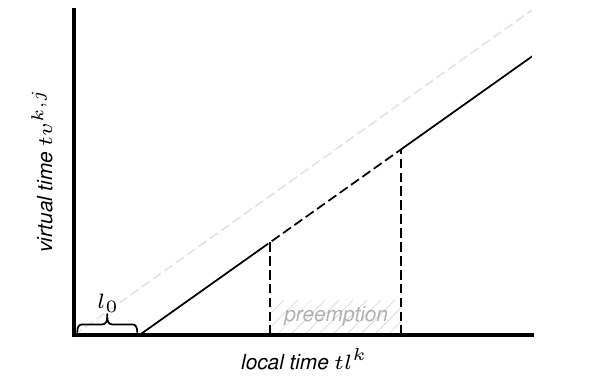}
		\label{fig:discontinuous_virtual_time}
	}
	\subfigure[Progression of continuous virtual time.]{
		\includegraphics[width=0.45\textwidth,trim={0.0cm 1.6cm 0.0cm 0.5cm},clip]{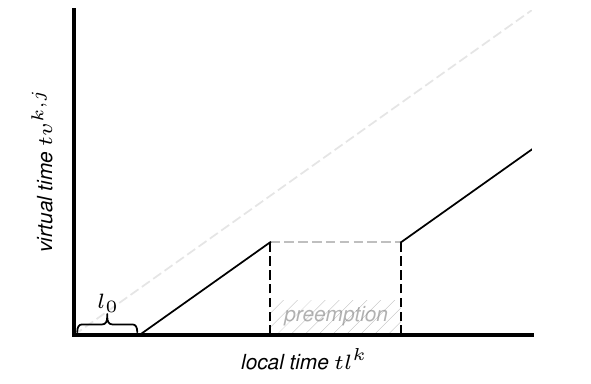}
		\label{fig:continuous_virtual_time}
	}
	\caption{We are comparing the progression of virtual time in case of a discontinuous versus a continuous virtual clock.}
	\label{fig:example}
\end{figure*}
For a CVC, we find that the succession property does not hold in case of a preemption between a virtual microtick $\nu(l)$ and $\nu(l)+1$. In case of a preemption by the host at microtick $l+1$, the set of preemptions $P_{l+1}$ is in an undefined state since it is about to be updated by the duration $p_{l+1}>0$ of the ongoing preemption. For the succession property this yields:
\begin{align}
\nu(l+1)-\nu(l)= 1-\big(\sum^{|P_{l+1}|-1}_{i=0}p_i\big)+\sum^{|P_l|-1}_{i=0}p_i
\end{align}
Note that the difference between the two preemption sums can be substituted by the preemption $p_{l+1}$ that is in progress which results in a violation of the succession property:
\begin{align}
\nu(l+1)-\nu(l)=1-p_{l+1} < 1
\end{align}
As a result, we can conclude that CVCs cannot be considered good clocks according to the virtual clock condition.

Even though the rationale backing these results might seem trivial in retrospect, in our experience, these fundamental insights are not very widespread in industry. Furthermore, the mathematical understanding that CVCs can never be good clocks has clear implications for the design of virtualized real-time systems. Firstly, in the case of a partitioning hypervisor with a one-to-one mapping of vCPUs to pCPUs, there is an equivalency of CVCs and DVCs due to the absence of preemptions. Therefore, as long as the number of distinct functions that shall be consolidated on a single platform is less or equal to the number of pCPUs, a partitioning approach featuring one-to-one mapping of functions to pCPUs is feasible. However, e.g., in the automotive industry, the number of virtual ECUs exceeds the number of pCPUs rendering a one-to-one mapping impossible, calling for the consolidation of mixed-critical functionality on a shared pCPU. We have established that the preemptions inherent with vCPU scheduling require a DVC. However, the use of a DVC implies that it is impossible to design virtualized time-triggered real-time systems that consolidate functionality on a processor core and are unaware of their virtualization. For example, it is not feasible to migrate legacy software, which expects a continuous progression of time, to execute inside a VM on a consolidating hypervisor without porting overhead, making the software aware of the holes in the timeline.

For the remainder of the paper, if not stated differently, a virtual clock always refers to a DVC, and we refer to a VM with a good clock as a \emph{virtual node}.

\subsection{Clock Synchronization Precision}
\label{ssec:clock_sync_precision}
With the definition of virtual nodes, we have to extend our model of a distributed real-time system. Therefore, we summarize all physical nodes $k\in \mathcal{N}$ and virtual nodes $j\in \mathcal{V}^k$ in a set:
\begin{align}
D:=\mathcal{N}\cup\bigcup_{k\in \mathcal{N}}\mathcal{V}^k
\end{align}
Each node $n\in D$ has a finite number of neighboring nodes $\mathcal{N}^n\subset D$. The timestamp of an event $e$ of a physical or virtual node $n\in D$ is given by $tn^n(e)$, and the microtick of a good clock $C^n$ is denoted by $i\in\mathbb{N}_0$. Furthermore, we assume all nodes $n\in D$ come with a good clock. Let
\begin{align}
G:=\{C^k,C^{k,j}|k\in \mathcal{N}\wedge\forall k: j\in \mathcal{V}^k\}
\end{align}
be the set of all good clocks. Given a set of good clocks that are synchronized via an arbitrary clock synchronization protocol.
\begin{definition}
	\label{def:clock_synchronization_precision}
	The precision of clock synchronization $\Pi$ is given by the maximum difference between any two good clocks in the distributed real-time system $D$:
	\begin{align}
	\Pi:=\max_{\forall n,n'\in D,\forall i\in \mathbb{N}_0}(|ts(tn^n_i)-ts(tn^{n'}_i)|)
	\end{align}
\end{definition}
For the remainder of the paper, we will discuss how virtualization degrades the clock synchronization precision, and we derive an upper bound on clock synchronization precision for virtualized distributed real-time systems, specifically for the IEEE\,802.1AS~\cite{ieee8021as} clock synchronization standard.

\section{Clock Synchronization in Virtualized Distributed Real-Time Systems}
\label{sec:clock_sync_in_virtualized_distributed_rt_systems}

The nodes of a virtualized distributed real-time system $D$ establish a shared time base by exchanging \emph{clock synchronization messages} at a fixed period $S$ that we call the \emph{resynchronization period}. We refer to a specific resynchronization period with $s\in\mathbb{N}_0$.
\begin{definition}
	\label{def:sync_message}
	A clock synchronization message from a sender $n\in D$ to a receiver $n'\in D$ is a tuple containing an estimated message delay $e^{n,n'}_s$ accounting for the end-to-end latency between nodes, a transmission timestamp $tn^{n}(tx_s^{n'})$ of the message on node $n$, and an arbitrary payload $\rho_s$ that depends on the specific clock synchronization protocol.
	\begin{align}
	m_s^{n,n'}:=(e^{n,n'}_s,tn^{n}(tx_s^{n'}),\rho_s)
	\end{align}
\end{definition}
We denote the event of receiving a message $m_s^{n,n'}$ with $rx_s^n$. The receiver can combine its reception timestamp $tn^{n'}(rx_s^n)$ and the information in the clock synchronization message to derive the time offset between its own and the sender's clock.

\subsection{The Reading Error}
\label{ssec:reading_error}
We provide a brief overview of Kopetz work on clock synchronization in distributed real-time systems~\cite{DBLP:journals/tc/KopetzO87,DBLP:books/sp/Kopetz11}. For more details, we refer directly to his work. Kopetz calls the time offset between a receiver's and a sender's clock the \emph{correction term}.
\begin{definition}[Correction Term~\cite{DBLP:journals/tc/KopetzO87}]
	\label{def:correction_term}
	The correction term $c^{n,n'}_s$ equals the difference of the time of reception of message $tn^{n'}(rx_s^n)$, the time of transmission $tn^{n}(tx_s^{n'})$, and the estimated message delay $e^{n,n'}_{s}$ that deviates from the actual message delay $a^{n,n'}_{s}$ by the reading delay~$\epsilon_s^{n,n'}$:
	\begin{align}
	c^{n,n'}_s = tn^{n'}(rx_s^n) - tn^{n}(tx_s^{n'}) - e^{n,n'}_{s} = \Delta_s^{n,n'} + \epsilon_s^{n,n'}\nonumber
	\end{align}
	Therefore, the correction term breaks down into the the actual time difference $\Delta_s^{n,n'}$ between clock $C^n$ and clock $C^{n'}$ and the reading delay $\epsilon_s^{n,n'}=a^{n,n'}_{s}-e^{n,n'}_{s}$.
\end{definition}

\begin{figure}[!t]
	\centering
	\includegraphics[width=5in,trim={0.75cm 0.0cm 0.5cm 0.05cm},clip]{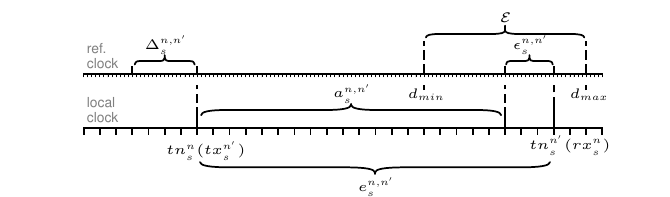}
	\caption{Showing the relationship between the reading delay $\epsilon_s^{n,n'}$, the actual propagation delay $a^{n,n'}_s$, the estimated delay $e^{n,n'}_s$, and the reading error $\mathcal{E}$. The reading delay is bound by the minimum $d_{min}$ and maximum $d_{max}$ message delay.}
	\label{fig:reading_error}
\end{figure}
The reading delay $\epsilon_s^{n,n'}$ between two nodes $n$ and $n'$ represents the discrepancy between the actual time $a^{n,n'}_s$ it takes a synchronization message from node $n$ to node $n'$ and the estimated delay $e^{n,n'}_{s}$. This discrepancy traces back to variable latencies on a communication path from a sender $n$ of a clock synchronization message to a receiver $n'$. These discrepancies are not compensated for and are typically caused by jitter such as the send time jitter, the medium access time jitter, the jitter of the propagation delay on the medium, and the receive time jitter~\cite{DBLP:journals/tc/KopetzO87,DBLP:books/sp/Kopetz11}. The sum of all variable latencies on all possible communication paths from a sender to a receiver of a clock synchronization message in a distributed real-time system is bound by the minimum message delay $d_{min}$ and the maximum message delay $d_{max}$. As a result, for any two reading delays $\epsilon_s^{n,n'}$ and $\epsilon_s^{n,n^*}$ during a resynchronization period $s$ it always is:
\begin{align}
\label{eq:min_max_message_delay}
|\epsilon_s^{n,n'}-\epsilon_s^{n,n^*}|\le d_{max}-d_{min}
\end{align}
We call the difference between the maximum $d_{max}$ and the minimum $d_{min}$ message delay in a distributed real-time system the \emph{latency jitter} or the \emph{reading error}.
\begin{definition}
	\label{def:reading_error}
	The reading error $\mathcal{E}$ is given by the difference of the maximum and minimum possible delay of a synchronization message in the virtualized distributed system $D$:
	\begin{align}
	\mathcal{E}=d_{max}-d_{min} \nonumber
	\end{align}
	It provides an upper bound for the reading delay $\epsilon^{n,n'}_s$ between two nodes $n$ and $n'$ during any resynchronization period $s$.
\end{definition}
The relation between reading delay and reading error is illustrated in~Figure~\ref{fig:reading_error}.

\subsection{Synchronization Condition}
\label{ssec:synchronization_condition}
We can see that the reading error bounds the difference between the clocks of two nodes $n$ and $n'$ since in the worst-case it is $c^{n,n'}=\Delta_s^{n,n'} + \mathcal{E}$. As such, we can utilize the reading error in combination with the maximum clock drift rate $r_{max}$ (c.f.~Definition~\ref{def:max_drift_rate}) to derive an upper bound on clock synchronization precision as follows. After synchronization, two good, free running clocks can drift apart for the time of the synchronization period $S$ with a rate of at maximum $2\cdot r_{max}$ yielding the \emph{drift offset} $\Gamma=2\cdot r_{max}\cdot S$. The reading error $\mathcal{E}$, the drift offset $\Gamma$, and the precision $\Pi$ of clock synchronization yield the \emph{clock synchronization condition}~\cite{DBLP:books/sp/Kopetz11}.
\begin{theorem}[Clock Synchronization Condition~\cite{DBLP:books/sp/Kopetz11}]
	\label{theorem:clock_synchronization_condition}
	Given the reading error $\mathcal{E}$ and the drift offset $\Gamma$, a virtualized distributed real-time system can be synchronized with precision $\Pi$ only if following inequality holds:
	\begin{align}
	\Gamma + \mathcal{E} \leq \Pi
	\end{align}
\end{theorem}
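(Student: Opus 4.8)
The plan is to establish the lower bound $\Pi \ge \Gamma + \mathcal{E}$ directly, which is exactly the contrapositive of the claimed ``only if'': a system whose precision $\Pi$ satisfies $\Pi < \Gamma + \mathcal{E}$ is refuted by exhibiting a single admissible execution in which two good clocks diverge by $\Gamma + \mathcal{E}$, since by Definition~\ref{def:clock_synchronization_precision} the precision must dominate the clock spread at \emph{every} microtick. Concretely, I would fix two good clocks $C^n, C^{n'} \in G$ and analyse one full resynchronization period, from the $s$-th resynchronization to the $(s{+}1)$-th.

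The first step is to bound from below the residual offset that remains \emph{immediately after} the $s$-th resynchronization. By Definition~\ref{def:correction_term} the quantity a node can apply is the correction term $c^{n,n'}_s = \Delta^{n,n'}_s + \epsilon^{n,n'}_s$, so after correction the true offset still carries the unobservable reading delay $\epsilon^{n,n'}_s$, which by Definition~\ref{def:reading_error} is bounded only by $\mathcal{E}$. The step I expect to be the main obstacle is arguing that this bound is genuinely \emph{uncompensable} and tight: one must produce two executions that are indistinguishable to every node (identical local timestamps, identical message contents, hence identical correction terms under any deterministic synchronization algorithm) but in which the actual end-to-end delays sit at $d_{min}$ and $d_{max}$ respectively, so the true offsets differ by $d_{max}-d_{min} = \mathcal{E}$. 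Any algorithm then applies the same correction in both, and in at least one of them the post-synchronization offset is $\mathcal{E}$; equivalently, the adversary simply picks the actual delay realizing $\epsilon^{n,n'}_s = \mathcal{E}$. This pins the ``after-sync'' spread of $C^n$ and $C^{n'}$ at $\mathcal{E}$.

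The second step lets both clocks free-run for the rest of the period. By Definition~\ref{def:max_drift_rate} each clock's rate lies in $[1-r_{max}, 1+r_{max}]$, so the adversary runs $C^n$ at $1+r_{max}$ and $C^{n'}$ at $1-r_{max}$, oriented so as to \emph{widen} rather than cancel the existing $\mathcal{E}$-offset; over a resynchronization period of length $S$ this adds exactly the drift offset $2\,r_{max}\,S = \Gamma$. Hence at the microtick just before the $(s{+}1)$-th resynchronization the two good clocks differ by $\Gamma + \mathcal{E}$, and since $\Pi$ is the maximum such difference over all nodes and microticks we get $\Pi \ge \Gamma + \mathcal{E}$, i.e.\ synchronization with precision $\Pi$ is possible only if $\Gamma + \mathcal{E} \le \Pi$. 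The remaining bookkeeping is minor: checking that the reference-clock granularity $g^r$ enters only as a negligible sub-$g^r$ term that can be absorbed, and verifying that the sign chosen for the residual offset in step one is consistent with the drift direction chosen in step two so that the two contributions genuinely add.
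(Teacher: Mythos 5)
Your argument is correct and matches the paper's treatment: the paper gives no formal proof of this theorem (it is imported from Kopetz with only the informal justification in the preceding paragraph), and that justification is exactly your two-step decomposition --- an uncompensable post-resynchronization residual of up to $\mathcal{E}$ followed by a relative drift of up to $\Gamma = 2\, r_{max}\, S$ over the resynchronization period $S$. One small caution: your indistinguishability detour, carried out rigorously, would only certify a residual of $\mathcal{E}/2$, but your fallback of letting the adversary realize $\epsilon^{n,n'}_s = \mathcal{E}$ directly is precisely what Definition~\ref{def:reading_error} licenses and what the paper itself relies on.
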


\subsection{Degradation of Clock Synchronization Precision}
\label{ssec:degradation_of_clock_sync_precision}

The clock synchronization condition illustrates that the achievable clock synchronization precision degrades with the drift offset and the reading error. Now, we discuss how clock synchronization precision is affected by the drift offset in a virtualized distributed real-time system. The effect of the drift offset on clock synchronization precision is straightforward since it is determined by known parameters, namely, by the maximum drift rate of the oscillators of the local clocks and the length of the clock synchronization period. The introduction of virtual clocks does not negatively impact the maximum drift rate encountered in a distributed system as long as they fulfill the virtual clock condition, which we have shown to be true for DVCs in Section~\ref{ssec:continuous_discontinuous_clocks}. As a result, there is no degradation of clock synchronization precision concerning the drift offset.

Determining the effect of virtualization on the reading error, on the other hand, requires us to have a closer look at the sources of latency and its jitter on a communication path between the sender and the receiver of a synchronization message. To that end, we state a formula to calculate the minimum $d^{k,k'}_{min}$ and maximum $d^{k,k'}_{max}$ message delays and with them the reading error by identifying all sources of latency between two directly connected physical nodes $k$ and $k'$. In a second step, we will extend both formulas to account for the introduction of virtualization to nodes $k$ and $k'$. Finally, we will expand the minimum, and maximum message delay to a path in a virtualized distributed real-time system~$D$ and use the resulting formulas to derive the reading error.

The message delay between two neighboring nodes $k$ and $k'$ starts accumulating with the creation of the transmission timestamp on node $k$ that can already be afflicted by high jitter depending on whether timestamping is performed in hardware or software. If timestamping is performed in software, it is subject to OS latencies that can vary significantly due to interrupt handling and scheduling~\cite{cochran:2011,DBLP:conf/cloud/LiSPG14}. Therefore, software timestamping results in a high discrepancy between the minimum and the maximum latency, thus drastically increasing the reading error. In order to mitigate a high reading error due to software timestamping, modern NICs come with timestamping support that implements egress and ingress timestamping of packets in hardware, thus minimizing timestamping jitter. We introduce functions $\mathcal{L}_{min}$ and $\mathcal{L}_{max}$ that return the respective minimum and maximum latency induced by a component on the communication path between $k$ and $k'$. As a result, we can identify the minimum and maximum transmit timestamping latency of node $k$ with $\mathcal{L}^{k}_{min}(txts)$ and $\mathcal{L}^{k}_{max}(txts)$ respectively.

After timestamping, the message is put on the medium and transmitted to the receiver~$k'$. The medium access latency depends on the link mode (full-duplex/half-duplex) and the arbitration mechanism in the case of half-duplex (e.g., carrier-sense multiple access). We denote the minimum and maximum medium access~($ma$) latency of the sending node $k$ with $\mathcal{L}^{k,k'}_{min}(ma)$ and $\mathcal{L}^{k,k'}_{max}(ma)$ respectively. The transmission of the message on the medium adds a minimum and maximum latency $\mathcal{L}^{k,k'}_{min}(c)$ and $\mathcal{L}^{k,k'}_{max}(c)$ that depend on the length of the cable and its material. Finally, node $k'$ receives the message and creates a reception timestamp. Again, just as with transmission timestamp creation, reception timestamp creation latency can vary depending on whether it is implemented in hardware or software. In accordance with our notation for transmit timestamping latency of node $k$ we write $\mathcal{L}^{k'}_{min}(rxts)$ and $\mathcal{L}^{k'}_{max}(rxts)$ for the minimum and maximum receive timestamping latency of node $k'$. Therefore, the minimum and maximum message delays between two directly connected physical nodes $k$ and $k'$ are given by the sums of minimum and maximum latency:
\begin{align}
d_{min}^{k,k'}&=\mathcal{L}^{k}_{min}(txts)+\mathcal{L}^{k,k'}_{min}(ma)+\mathcal{L}^{k,k'}_{min}(c)+\mathcal{L}^{k'}_{min}(rxts) \nonumber \\ d_{max}^{k,k'}&=\mathcal{L}^{k}_{max}(txts)+\mathcal{L}^{k,k'}_{max}(ma)+\mathcal{L}^{k,k'}_{max}(c)+\mathcal{L}^{k'}_{max}(rxts)
\end{align}

The introduction of virtualization into distributed real-time systems inserts a new abstraction layer and source of latency in the form of the hypervisor to the communication path. Depending on the implementation of clock synchronization and timestamping in the virtualized system, these latencies can add to the minimum and maximum message delays, which degrades the reading error. For the sake of providing a comprehensive discussion of how virtualization can affect the reading error, we, for now, presume that timestamping is performed in software inside a VM, resulting in a worst-case setup in which hypervisor latencies add to the message delay. As a result, we have to extend the formulas for the minimum $d_{min}^{k,k'}$ and maximum $d_{max}^{k,k'}$ message delay by an additional $\mathcal{L}^{n}_{min}(hv)$ and $\mathcal{L}^{n}_{max}(hv)$ with $n\in\{k,k'\}$ that captures minimum and maximum hypervisor latencies of~nodes~$k$~and~$k'$.

\setlength\extrarowheight{3pt}
\begin{table}[t]
	\centering
	\caption{Summary of the latencies that contribute to the minimum and maximum message delays on a communication path between two neighboring nodes $k$ and $k'$. The hypervisor latency $\mathcal{L}(hv)$ is divided by $\mathcal{L}(vme)$, $\mathcal{L}(sched)$, and $\mathcal{L}(vn)$.}
	\begin{tabularx}{\textwidth}{c|c|c}
		\shortstack{\textbf{\textit{Latency}}\\$x\in\{min, max\}$\\$n\in\{k,k'\}$} & \textbf{\textit{Description}} & \shortstack{\textbf{\textit{Magnitude}}\\\textbf{\textit{[sec]}}} \\
		\hline\hline
		$\mathcal{L}^{k}_{x}(txts)$ & \multicolumn{1}{|m{0.6\columnwidth}|}{\centering Min/max latency introduced by transmit timestamp creation in the sender $k$} & $10^{-7}$ to $10^{-5}$ \\
		$\mathcal{L}^{k,k'}_{x}(ma)$ & \multicolumn{1}{|m{0.6\columnwidth}|}{\centering Min/max latency introduced by the medium access} & $10^{-9}$ to $10^{-7}$ \\
		$\mathcal{L}^{k,k'}_{x}(c)$ & \multicolumn{1}{|m{0.6\columnwidth}|}{\centering Min/max latency introduced by transmission on the medium connecting nodes $k$ and $k'$} & $10^{-7}$ to $10^{-2}$ \\
		$\mathcal{L}^{k'}_{x}(rxts)$ & \multicolumn{1}{|m{0.6\columnwidth}|}{\centering Min/max latency introduced by reception \newline timestamp creation in the receiver $k'$} & $10^{-7}$ to $10^{-5}$ \\
		\hline\hline
		$\mathcal{L}^{n}_{x}(vme)$ & \multicolumn{1}{|m{0.6\columnwidth}|}{\centering Min/max latency introduced by VM exits} & $0$ to $10^{-3}$ \\
		$\mathcal{L}^{n}_{x}(sched)$ & \multicolumn{1}{|m{0.6\columnwidth}|}{\centering Min/max latency introduced by vCPU scheduling} & $0$ to $10^{-3}$ \\
		$\mathcal{L}^{n}_{x}(vn)$ & \multicolumn{1}{|m{0.6\columnwidth}|}{\centering Min/max latency introduced by virtual networking} & $10^{-5}$ to $10^{-2}$ \\
	\end{tabularx}
	\label{tab:reading_delay_resource_mappings}
\end{table}

To allow a more in-depth analysis of how virtualization affects the reading error, we further differentiate different types of hypervisor abstractions or resource mappings whose minimum and maximum latencies sum up to yield the minimum and maximum hypervisor latencies. We identify three different types of source of hypervisor latency:
\begin{description}[style=unboxed,leftmargin=0cm]
	\item[\textbf{VM Exits:}] If a VM accesses a virtual resource or when a physical resource triggers an interrupt for a specific VM, this causes a VM exit. A VM exit is a secure switch from the VM's unprivileged execution context to the privileged execution context of the hypervisor. The hypervisor determines the cause of the VM exit and handles the operation on behalf of the VM. For example, it mediates the access of the VM to an actual physical resource. What virtual resources are handled by the hypervisor directly is implementation-dependent. However, often they include virtual advanced interrupt controllers (vAPIC), Second Level Address Translation (SLAT), e.g., Intel's Extended Page Tables (EPT)~\cite{intel2020}, or virtual clocks and timers. The latency introduced by VM Exits can vary depending on the specific virtual resource triggering a VM exit and on the implementation of VM exit handling the hypervisor. In our experience, the processing time of a VM exit ranges from hundreds of nanoseconds to several milliseconds. We denote the minimum and maximum latency with $\mathcal{L}^{n}_{min}(vme)$ and $\mathcal{L}^{n}_{max}(vme)$ respectively. Notice that the absence of the same gives the minimum latency of a VM exit, so it is always $\mathcal{L}^{n}_{min}(vme)=0$ 
	\item[\textbf{vCPU Scheduling:}] The hypervisor maps the execution of vCPUs to pCPUs according to a scheduling policy. The time a vCPU is preempted and waiting to be dispatched by the scheduler yields the latency introduced by scheduling decisions. This time is also referred to as the worst-case response time of the used scheduling algorithm. The worst-case response time of a scheduling algorithm depends on the supposed task model and the task set. There is a rich literature on response time analysis that we refer the reader to~\cite{DBLP:conf/rtcsa/BalbastreRC09,DBLP:conf/emsoft/AlmeidaP04,DBLP:conf/dac/BeckertNEP14}. The magnitude of vCPU scheduling latencies depends on the scheduler's granularity and can range from a tenth of microseconds to milliseconds. We write $\mathcal{L}^{n}_{min}(sched)$ and $\mathcal{L}^{n}_{max}(sched)$ for the minimum and maximum vCPU scheduling latency. Again, it is always $\mathcal{L}^{n}_{min}(sched)=0$ since in the best case, there is no vCPU preemption at all. Note the difference between a VM exit and a preemption. In case of a VM exit, the hypervisor performs a privileged operation on behalf of the VM and accounts its execution time to the VM. In contrast, in the preemption case, the vCPU of the VM stops execution altogether and does not resume until the scheduler dispatches it again.
	\item[\textbf{Complex Virtual Devices:}] Other virtual resources, such as virtual networking or virtual disks, require the assistance of a privileged VM in order to provide more complex device emulation or to multiplex the access to a shared physical device using an interface that provides a simplified virtual device to co-located user VMs, such as the standardized VirtIO interface\footnote{\url{http://docs.oasis-open.org/virtio/virtio/v1.0/virtio-v1.0.html}}. The latency introduced by these complex assisted virtual resources is critical for clock synchronization if a virtual node utilizes a virtual NIC for connectivity. Virtual networking commonly adapts a frontend-backend architecture in which the frontend driver resides in the user VM, whereas the backend driver resides in the privileged VM and communicates to the actual device driver. The communication between frontend and backend is implemented as a shared memory with a notification mechanism, e.g., based on the injection of virtual interrupts into a VM. Therefore, the latency introduced by virtual networking depends on the processing time of network packets that are incoming or outgoing over the virtual network backend in the privileged VM and the latencies introduced by the notification mechanism. The processing time of a packet in the backend of the privileged VM depends strongly on the vCPU scheduling policy of the hypervisor.
	Furthermore, we have to account for OS latency since the send and receive timestamps of the frontend driver in the OS of a user VM are software-based. As a result of this complexity, virtual networking introduces high latency. Complexity by itself does not yet degrade clock synchronization precision. However, the interaction of various components residing in different VMs and privilege levels tend to result in a high discrepancy between minimum $\mathcal{L}^{n}_{min}(vn)$ and maximum $\mathcal{L}^{n}_{max}(vn)$ latency ranging from a tenth of microseconds up to several milliseconds~\cite{DBLP:conf/globecom/OljiraBTG16}.
\end{description}
We summarize all considered latencies and their approximate magnitudes in Table~\ref{tab:reading_delay_resource_mappings}. After discussing how the hypervisor latency is composed, we now provide formulas to obtain the minimum and the maximum hypervisor latency:
\begin{align}
\mathcal{L}_{min}^{n}(hv)&=\mathcal{L}^{n}_{min}(vn) \nonumber\\
\mathcal{L}_{max}^{n}(hv)&=\mathcal{L}^{n}_{max}(vme)+\mathcal{L}^{n}_{max}(sched)+\mathcal{L}^{n}_{max}(vn)
\end{align}
This let's us extend the formula for the minimum and maximum message delay by the minimum and maximum hypervisor latency introduced to nodes $k$ and $k'$ if they are virtualized:
\begin{align}
d_{min}^{k,k'}&=\mathcal{L}_{min}^{k}(hv)+\mathcal{L}^{k}_{min}(txts)+\mathcal{L}^{k,k'}_{min}(ma)+\mathcal{L}^{k,k'}_{min}(c)+\mathcal{L}^{k'}_{min}(rxts)+\mathcal{L}_{min}^{k'}(hv) \nonumber \\ d_{max}^{k,k'}&=\mathcal{L}_{max}^{k}(hv)+\mathcal{L}^{k}_{max}(txts)+\mathcal{L}^{k,k'}_{max}(ma)+\mathcal{L}^{k,k'}_{max}(c)+\mathcal{L}^{k'}_{max}(rxts)+\mathcal{L}_{max}^{k'}(hv)
\end{align}
If node $k$ or $k'$ are native, non-virtualized nodes, the terms for the hypervisor latencies simply equal zero.

So far, we have been considering the minimum and maximum message delay between two neighboring nodes. However, these formulas are easily extended to capture the minimum and maximum \emph{path message delay} between any two nodes $n,n'\in D$ of a virtualized distributed system~$D$. Let $E_{n,n'}$ be the path from a node $n\in D$ to a node $n'\in D$, whereas the link between two adjacent nodes $k,k'\in D$ of the path is represented by a tuple $(k,k')\in E_{n,n'}$.
\begin{definition}
	\label{def:min_max_message_delays_path}
	The minimum and maximum message delays $d_{min}^{n,n'}$ and $d_{max}^{n,n'}$ on path $E_{n,n'}$ are then given by the sum of the minimum and maximum message delays $d_{min}^{k,k'}$ and $d_{max}^{k,k'}$ of all adjacent nodes $(k,k')\in E_{n,n'}$ on said path:
	\begin{align}
	d_{min}^{n,n'}=\sum_{(k,k')\in E_{n,n'}}d_{min}^{k,k'}\hspace{1cm}\text{ and }\hspace{1cm}d_{max}^{n,n'}=\sum_{(k,k')\in E_{n,n'}}d_{max}^{k,k'}
	\end{align}
\end{definition}
As a result of Definition~\ref{def:min_max_message_delays_path}, we can determine the minimum, and maximum message delays $d_{min}$ and $d_{max}$ for the entire virtualized distributed system~$D$ according to Definition~\ref{def:reading_error} by finding the minimum and maximum message delays of all possible paths in $D$.
\begin{definition}
	\label{def:min_max_message_delay}
	Let $E_D$ be the set of all possible paths between two nodes in the virtualized distributed system $D$. The minimum and maximum message delays $d_{min}$ and $d_{max}$ that yield the reading error of $D$ according to Definition~\ref{def:reading_error} are then given by:
	\begin{align}
	d_{min}=\min_{\forall E_{n,n'}\in E_D}(d_{min}^{n,n'})\hspace{1cm}\text{ and }\hspace{1cm}d_{max}=\max_{\forall E_{n,n'}\in E_D}(d_{max}^{n,n'})
	\end{align}
\end{definition}
Definition~\ref{def:min_max_message_delay} provides us with several insights on how the precision of clock synchronization in virtualized distributed systems can degrade. First of all, regardless of virtualized or native distributed systems, we can see that big networks with varying path lengths cause a high discrepancy between the induced minimum and maximum latency of a message transmission leading to an increased reading error. Heterogeneous nodes and network technologies intensify the negative effects of asymmetric networks since the minimum and maximum latencies of timestamp creation or medium access can vary significantly from hardware platform to hardware platform just as the transmission latency varies with the used medium, e.g., a glass fiber compared to a twisted-pair copper cable. Virtualization adds to these latencies. In particular, virtual networking can negatively affect the reading error due to its complexity and resulting in highly variable latencies.

Conversely, careful network planning and design and use of preferably homogeneous hardware platforms for the network nodes can mitigate a high reading error. When it comes to moderating the effects of virtualization on the reading error, one possibility is to reduce the discrepancy between the minimum and maximum latencies induced by the hypervisor, e.g., by accounting for latency introduced by virtual networking during runtime~\cite{yao2017}. However, a straightforward approach for ultimately mitigating hypervisor latency is to utilize hardware timestamping by passing through a NIC to a dedicated VM. Following the dependent clock paradigm, this VM can then share its synchronized time with co-located VMs and the hypervisor, thus enabling coordinated time-triggered scheduling of VCPUs according to a global VCPU, task, and network schedule.

\subsection{A Dependent Clock for the ACRN Hypervisor}
\label{ssec:dependent_clock_paradigm}

The dependent clock paradigm mitigates the drawbacks of executing one instance of a clock synchronization protocol per VM~\cite{DBLP:conf/osdi/BroomheadCRV10} by performing clock synchronization only in a single VM and sharing the synchronized time with co-located VMs on the same host. This is beneficial with respect to hypervisor latencies that degrade the reading error, since a dedicated \emph{clock synchronization VM} can be granted direct access to a physical NIC effectively preventing any latencies due to virtual networking, VM exits, and vCPU scheduling since clock synchronization can resort to hardware timestamping yielding $\mathcal{L}_{min}(vn)=\mathcal{L}_{max}(vn)=0$, $\mathcal{L}_{min}(vme)=\mathcal{L}_{max}(vme)=0$, and $\mathcal{L}_{min}(sched)=\mathcal{L}_{max}(sched)=0$ (cf.~Section~\ref{ssec:degradation_of_clock_sync_precision}).

Typically, sharing the synchronized time with co-located VMs is achieved through a shared memory region. This makes the hypervisor access to the synchronized time as easy as mapping the shared memory region into the hypervisors address space. The hypervisor can then utilize the synchronized time and an upper bound on the precision of clock synchronization to derive a global time~\cite{DBLP:books/sp/Kopetz11} whose marcoticks drive the vCPU scheduling of a time-triggered hypervisor. Our dependent clock implementation consists of three sub-components -- a shared memory implementation in the hypervisor, a kernel module in the Linux kernel of the guest OS, and a user space component performing the actual clock synchronization.

The ACRN hypervisor~\cite{DBLP:conf/vee/LiXRD19} is a type-1 hypervisor that is configured statically at compile time. ACRN's configuration includes vCPU configuration with pCPU affinity, main memory partitioning, and device configuration that is either device passthrough or device virtualization. ACRN's device virtualization comprises a shared memory device \lstinline|ivshmem| that allows the configuration of shared memory regions between VMs at compile time\footnote{\url{https://projectacrn.github.io/2.0/developer-guides/hld/ivshmem-hld.html}}. The shared memory device is implemented as a virtual PCI device that either resides in the service OS or the hypervisor and exposes the base address and size of the shared memory region to a VM.

Our goal was to utilize a shared memory region between VMs and the hypervisor to expose clock synchronization parameters that yield the synchronized time in combination with the invariant TSC. After studying the implementation of the \lstinline|ivshmem| shared memory device to assess its suitability for our purposes, we decided not to use the existing implementation but instead implement a separate shared memory device exposing the clock synchronization parameters for several reasons. We were mainly concerned with access control and the semantics of this particular shared memory. The existing implementation was intended for bidirectional inter-VM communication. We aimed to expose the clock synchronization parameters of a dedicated clock synchronization VM to co-located VMs, including the hypervisor. Only this particular VM should have write access to the shared memory. Even though possible, utilizing \lstinline|ivshmem| would have required us to make the general shared memory implementation aware of the specifics of our use case, thus breaking the modularity emphasized by the ACRN project\footnote{\url{https://projectacrn.github.io/2.0/developer-guides/modularity.html}}.

As a result, we implemented a new synchronized time shared memory device \lstinline|stshmem| located in the hypervisor. The virtual device reassembles the \lstinline|ivshmem| device in that it is a virtual PCI that exposes a shared memory's base address and size to preconfigured VMs. In contrast to the \lstinline|stshmem| device, though, its compile-time configuration grants only a dedicated clock synchronization VM write access in order to be able to update the clock synchronization parameters. The shared clock synchronization parameters include a time multiplier, a shift value, the synchronized time at the last update, and the TSC value at the last update of the parameters. A Linux kernel module discovers the PCI device and retrieves and exports the shared memory base address and size to the Linux timekeeping subsystem. Linux timekeeping utilizes the clock parameters and the current TSC value to implement a para-virtualized clock that yields the synchronized time. Following existing Linux kernel clocks such as \lstinline|CLOCK_MONOTONIC| or \lstinline|CLOCK_REALTIME|, the para-virtualized clock is made available to kernel and user space as \lstinline|CLOCK_SYNCTIME|.

In the user space of the clock synchronization VM, we utilize Linux PTP to execute the gPTP clock synchronization protocol. The clock synchronization VM has dedicated access to the NIC that connects internally to the TSN switch. The \lstinline|ptp4l| clock synchronization application processes incoming clock synchronization messages and synchronizes the clock of the NIC. The \lstinline|phc2sys| application in turn interfaces with the Linux kernel's NTP subsystem that allows for the synchronization of Linux \lstinline|CLOCK_REALTIME| to the NIC's clock. We extended the NTP subsystem to update the clock parameters of \lstinline|CLOCK_REALTIME| and the clock synchronization parameters of \lstinline|CLOCK_SYNCTIME| in our shared memory region. As a result, co-located VMs can transparently retrieve the synchronized time by reading \lstinline|CLOCK_SYNCTIME|. The hypervisor can implement a synchronized clock equivalently to the Linux kernel by utilizing the clock synchronization parameters in the shared memory region.

\section{Analysis of IEEE\,802.1AS in a Virtualized Distributed System}
\label{sec:analysis_ieee802.1as_virtualized_distributed_system}
We will now use our formalization and our understanding of how virtualization affects the reading error to analyze IEEE\,802.1AS~\cite{ieee8021as} and derive an upper bound on the clock synchronization precision of IEEE~802.1AS in a virtualized distributed real-time system. At the same time, this shows the applicability of the formal model to real-world protocols.

\subsection{IEEE\,802.1AS}
IEEE~802.1AS, henceforth also referred to as gPTP, is a profile of PTP, which is specified in IEEE~1588-2008~\cite{ieee1588-2008}. It is a hierarchical clock synchronization protocol. The grandmaster clock $g\in D$ is elected by all so-called time-aware nodes using the best master clock algorithm (BMCA) specified in the standard~\cite{ieee1588-2008,ieee8021as}. The nodes of a distributed system synchronize their clocks to the grandmaster by periodically exchanging a series of messages. As before, we denote the resynchronization period with $S$ and a specific resynchronization period with $s$.

Furthermore, we differentiate two classes of gPTP messages. The first class of messages is only exchanged between neighboring nodes $n$ and $n'$ to estimate the propagation delay of a specific link. For a detailed description of the propagation delay measurement, we refer the reader to the standard~\cite{ieee8021as}.

The second class of messages is used to synchronize the nodes to the grandmaster. Synchronization information originates from the grandmaster and traverses the synchronization hierarchy. We consider gPTP in two-step mode. The grandmaster $g\in D$ periodically sends out a $Sync$ and a $Follow\_Up$ message to its neighbors $n\in N^g$ and stores the transmission timestamp $tn^g(tx_{Sync})$ of the $Sync$ message. The messages are received by the neighboring nodes, being processed, and forwarded down the synchronization hierarchy until they reach a leaf node. We examine a node $d$ in the clock synchronization hierarchy. Let $E_d$ be the path from the grandmaster node $g\in D$ to node $d$. Upon reception of a $Sync$ message from its predecessor node $d'\in D$ on the path $E_d$, node $d$ creates a reception timestamp $tn^{d}(rx_{Sync})$. The $Sync$ message itself is empty. Using the semantics introduced in Definition~\ref{def:sync_message} we can write it as $Sync:=(0,0,\emptyset)$. Next, node $d$ receives the $Follow\_Up$ message from node $d'$:
\begin{align}
Follow\_Up:=(e^{g,d'},0,\{tn^g(tx_{Sync}),R_{\frac{g}{d'}}\})
\end{align}
The $Follow\_Up$ message contains the actual synchronization information including the \textit{precise origin timestamp} that is the transmission timestamp $tn^g(tx_{Sync})$ of the $Sync$ message at the grandmaster $g$, the \textit{correction field} $e^{g,d'}$, and the $rateRatio$~$R_{\frac{g}{d'}}$. The correction field $e^{g,d'}$ is the accumulated propagation delay $e^{n,n'}$ of all neighboring nodes $(n,n')\in E_{d'}$ plus the residence time $T_n$ in each node $n$ along the path $E_{d'}$:
\begin{align}
e^{g,d'}=\sum_{\forall (n,n')\in E_{d'}}(e^{n,n'}+T^{n'})
\end{align}
We can derive the residence time in each node $n\in D$ from the reception and transmission timestamps of the $Sync$ message of the respective node and transform it into the time scale of the grandmaster clock:
\begin{align}
T^{n'}=R_{\frac{g}{n'}}(tn^{n'}(tx_{Sync})-tn^{n'}(rx_{Sync}))
\end{align}
Finally, node $d$ can use the synchronization information of the $Follow\_Up$ message, its reception timestamp $tn^{d}(rx_{Sync})$, and the propagation delay to the grandmaster $e^{g,d}=e^{g,d'}+e^{d',d}$ to derive a correction term $c_s^{g,d}$ for the synchronization period $s$ that is the approximated offset of the local clock $C^{d}$ to the grandmaster clock~$C^g$:
\begin{align}
\label{eq:1as_correction_term}
c_s^{g,d}=R_{\frac{g}{d}}\cdot tn^{d}(rx_{sync})-(tn^g(tx_{Sync})+e^{g,d}_s)
\end{align}

\subsection{The Reading Error of IEEE\,802.1AS}
\label{ssec:reading_error_ieee802.1as}
We have learned in Definition~\ref{def:correction_term} that the correction term given in Equation~(\ref{eq:1as_correction_term}) is afflicted with a reading delay $\epsilon_s^{g,d}$ so that it yields the actual time difference $\Delta^{g,d}_s$ between the grandmaster $g$ and the node $d$ plus the reading delay $c_s^{g,d}=\Delta^{g,d}_s+\epsilon_s^{g,d}$. Furthermore, given Inequality~(\ref{eq:min_max_message_delay}) and Definition~\ref{def:reading_error}, for any two reading delays $\epsilon_s^{g,d'}$ and $\epsilon_s^{g,d^*}$ we known that it is $|\epsilon_s^{g,d'} - \epsilon_s^{g,d^*}| \le \mathcal{E}$.

As a result, if we can derive a formula for the reading error of IEEE\,802.1AS, we found an upper bound of clock synchronization precision according to the clock synchronization condition (c.f.~Theorem~\ref{theorem:clock_synchronization_condition}). A minor tweak to Definition~\ref{def:min_max_message_delay} allows us to derive the reading error of IEEE\,802.1AS in a virtualized distributed real-time system. We have established in Section~\ref{ssec:degradation_of_clock_sync_precision} that the minimum and maximum message delays $d_{min}$ and $d_{max}$, which yield the reading error, are given by the minimum and maximum message delays of all possible paths~$E_D$ in a distributed system~$D$. In case of gPTP, we are not interested in bounding arbitrary reading delays $\epsilon^{n,n'}$ and $\epsilon^{n,n^*}$ (c.f.~Inequality~\ref{eq:min_max_message_delay}) but the reading delays $\epsilon^{g,n'}$ and $\epsilon^{g,n^*}$ with the grandmaster $g$. Therefore, we determine the minimum and maximum message delay of all possible paths $E_G\subset E_D$ from the grandmaster $g$ to its subordinate nodes $d\in D\setminus\{g\}$.
\begin{definition}[Reading Error of IEEE\,802.1AS]
	\label{def:reading_error_ieee802.1as}
	Let $E_G$ be the set of all paths between the grandmaster $g$ and its subordinate nodes $d\in D\setminus\{g\}$. The minimum and maximum message delays $d_{min}$ and $d_{max}$ are given by:
	\begin{align}
	d_{min}=\min_{\forall E_{n,n'}\in E_G}(d_{min}^{n,n'})\hspace{1cm}\text{ and }\hspace{1cm}d_{max}=\max_{\forall E_{n,n'}\in E_G}(d_{max}^{n,n'})\nonumber
	\end{align}
	The reading error $\mathcal{E}$ of IEEE\,802.1AS in a virtualized distributed real-time system $D$ is then given by:
	\begin{align}
	\mathcal{E}=d_{max}-d_{min}\nonumber
	\end{align}
\end{definition}
Knowing the reading error, we can derive an upper bound of clock synchronization precision for IEEE\,802.1AS in a virtualized distributed real-time system using the clock synchronization condition~(c.f.~Theorem~\ref{theorem:clock_synchronization_condition}).

\section{Experimental Evaluation}
\label{sec:evaluation}
We now present an experimental evaluation of our formal model and show its applicability to real-world systems. To that end, we first introduce our experimental setup and illustrate how our formal model can be applied to derive an upper bound on clock synchronization precision. Next, we describe how we measure the clock synchronization precision and discuss how this measurement is afflicted with an inherent measurement error. Finally, we present and discuss the experimental results, including a baseline experiment featuring native systems without virtualization, a resource consolidating hypervisor configuration, and a resource partitioning hypervisor configuration.

\subsection{Experimental Setup}
\label{ssec:experimental_setup}
We performed our experiments on three commercial edge computing devices\footnote{\url{https://www.tttech-industrial.com/wp-content/uploads/TTTech-Industrial_MFN-100.pdf}} ($mfn_1$, $mfn_2$, $mfn_3$). The devices come with an Intel~Atom~E3950 multicore processor with four cores at 1.594\,GHz each, 8\,GB main memory, and two Intel I210 NICs $NIC_1$ and $NIC_2$. $NIC_1$ connects internally to an integrated Linux-based TSN switch with four external Gigabit ports. We refer to the switches in $mfn_1$, $mfn_2$, and $mfn_3$ with $sw_1$, $sw_2$, and $sw_3$ respectively. The ethernet controller used in the TSN switches is an Intel I210 as well. The switch $sw_3$ of $mfn_3$ is configured to act as grandmaster clock ($gm$). All switches run a full preemption Linux kernel v4.9.78\footnote{\url{https://wiki.linuxfoundation.org/realtime/start}}. Both host nodes $mfn_1$ and $mfn_2$ connect to $sw_3$, the grandmaster clock, via their respective switches $sw_1$ and $sw_2$. We want to point out the symmetry and homogeneity of our experimental setup resulting in the exact same paths lengths and, except for minor variations, identical timing properties of intermediate nodes from the grandmaster $gm$ to the leaf nodes $mfn_1$ and $mfn_2$. All nodes of our experimental distributed real-time system execute OpenIL's\footnote{\url{https://www.openil.org/}} fork of LinuxPTP\footnote{\url{https://github.com/openil/linuxptp}} (v1.8) performing clock synchronization according to IEEE\,802.1AS. In case of the native setup, nodes $mfn_1$ and $mfn_2$ run Ubuntu Server~20.04.1 with a full preemption Linux kernel~v4.19.72. The network topology and configuration of the native setup is illustrated in Figure~\ref{fig:experiment_configuration_baseline}.
\begin{figure}[t]
	\centering
	\includegraphics[width=5in,trim={0.2cm 0.05cm 0.15cm 0.05cm},clip]{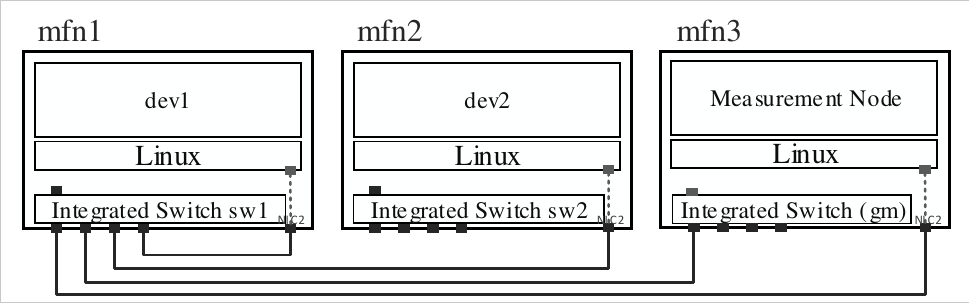}
	\caption{Network topology and node configuration of our baseline experiments. The edge computing devices $mfn_1$ and $mfn_2$ come with an integrated TSN switch that connects them to the switch of $mfn_3$ that acts as grandmaster clock ($gm$). Note that in the native setup, $mfn_2$ connects to the grandmaster $gm$ via the switch of $mfn_1$.}
	\label{fig:experiment_configuration_baseline}
\end{figure}

\begin{figure}[t]
	\centering
	\includegraphics[width=5in,trim={0.1cm 0.05cm 0.4cm 0.05cm},clip]{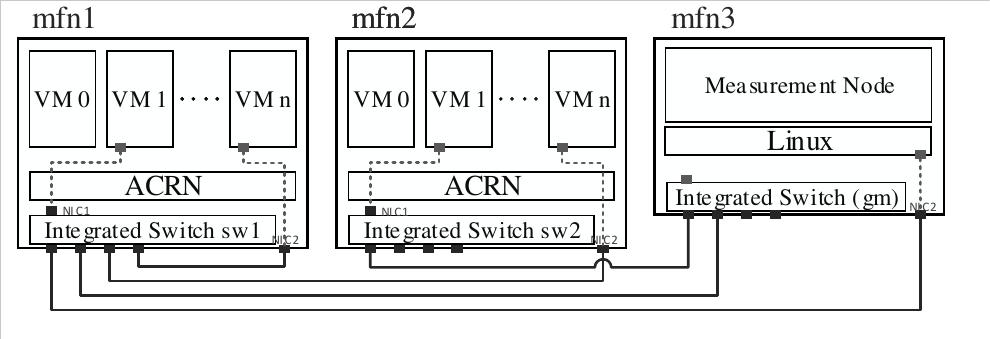}
	\caption{Network topology and node configuration of our virtualized experiments. VM $vm_0$ acts as privileged service VM, VM $vm_1$ as clock synchronization VM, and VM $vm_n$ as measurement VM on their respective node.}
	\label{fig:experiment_configuration}
\end{figure}

For the virtualized setup, nodes $mfn_1$ and $mfn_2$ run the ACRN hypervisor~v2.2 with our modifications to support a dependent clock. We indicate that a VM $vm_x$ runs on host $mfn_y$ by writing $mfn_y^{vm_x}$. On both virtualized nodes $mfn_1$ and $mfn_2$, $vm_0$ acts as ACRN's privileged service VM. It is configured with 2\,GiB of main memory, a single vCPU, and runs Ubuntu Server~20.04.1 with Linux kernel v5.4.59. For all other VMs, we use a configuration with 1\,GiB of main memory, and a single vCPU. They run Ubuntu Server~18.04.3 with a full preemption Linux kernel~v4.19.72. We pass-through $NIC_2$ of the respective edge computing device to $vm_n$ and $NIC_1$, which connects internally to the integrated switch, to $vm_1$. Figure~\ref{fig:experiment_configuration} illustrates the network topology and configuration of our virtualized setup. Furthermore, all VMs have access to the raw timestamp counter (TSC) of the processor, which thus provides a discontinuous virtual clock with $\nu(l)=l-l_0$ (cf.~Section~\ref{ssec:virtual_clocks}). The TSC-based discontinuous virtual clock fulfills the virtual clock condition and enables the operation of our dependent clock that provides a synchronized time to all VMs.

\subsection{Model Instantiation}
\label{ssec:latency_measurement}
In Section~\ref{ssec:degradation_of_clock_sync_precision}, we have discussed how clock synchronization precision in virtualized distributed real-time systems degrades with the reading error. As a result, to derive an upper bound on the precision according to the clock synchronization condition (c.f.~Theorem~\ref{theorem:clock_synchronization_condition}), we have to instantiate the variables of our model for all experiments. In the following, we will describe how to apply our model to a real system, derive the drift offset, and calculate the reading error.

\subsubsection{Drift Offset}
We can obtain the drift offset by substituting the resynchronization period $S$ and the maximum drift rate $r_{max}$ observed in our experimental setup. The resynchronization period $S$ is given by the configuration of the \lstinline|logSyncInterval| parameter of IEEE\,802.1AS which in our case equals to $S=2^{-3}\,s=125\,ms$. In the case of the maximum drift rate, we had difficulties obtaining an accurate value due to the lack of information on the oscillators used for the Intel I210 NICs and the Intel Atom processors in our edge computing devices. Therefore, we could either fallback to the maximum tolerated value for the drift rate specified in the IEEE\,802.1AS standard~\cite{ieee8021as} with $r_{max}=5\cdot 10^{-6}\,ppm$, which seemed too pessimistic, or measure the drift rate between $mfn_1$ and $mfn_2$ in a non-virtualized clock synchronization setup.

\newcolumntype{Y}{>{\centering\arraybackslash}X}
\begin{table}[t!]
	\setlength\tabcolsep{3pt}
	\centering
	\caption{Overview of minimum and maximum latencies that are used to obtain the path message delays and finally the reading error of a specific experimental setup.}
	\begin{tabularx}{\textwidth}{c|c|Y}
		Latencies & Nodes & Setups\\
		\hline\hline
		$\mathcal{L}^n_{min}(txts)=984\,ns$ & \multirow{2}{*}{\shortstack{$n\in\{mfn_1$, $mfn_2$, $sw_1$,\\ $sw_2$, $sw_3$/$gm\}$}} & \multirow{2}{*}{native HWTS} \\
		$\mathcal{L}^n_{max}(txts)=1024\,ns$ & & \\
		\hline
		$\mathcal{L}^n_{min}(rxts)=2148\,ns$ & \multirow{2}{*}{\shortstack{$n\in\{mfn_1$, $mfn_2$, $sw_1$, $sw_2$,\\ $sw_3$/$gm$, $cs_1$, $cs_2\}$}} & \multirow{2}{*}{\shortstack{native HWTS,\\ACRN BVT/VM exit-less HWTS}} \\
		$\mathcal{L}^n_{max}(rxts)=2228\,ns$ & & \\
		\hline\hline
		$\mathcal{L}_{min}^{n,n'}(ma)+\mathcal{L}_{min}^{n,n'}(c)=261\,ns$ & \multirow{2}{*}{$n,n'\in\{sw_3$/$gm$, $sw_1\}$} & \multirow{2}{*}{all} \\
		$\mathcal{L}_{max}^{n,n'}(ma)+\mathcal{L}_{max}^{n,n'}(c)=286\,ns$ & & \\
		\hline
		$\mathcal{L}_{min}^{n,n'}(ma)+\mathcal{L}_{min}^{n,n'}(c)=147\,ns$ & \multirow{2}{*}{$n,n'\in\{sw_1$, $mfn_2$/$mfn_2^{vm_n}\}$} & \multirow{2}{*}{all} \\
		$\mathcal{L}_{max}^{n,n'}(ma)+\mathcal{L}_{max}^{n,n'}(c)=179\,ns$ & & \\
		\hline
		$\mathcal{L}_{min}^{n,n'}(ma)+\mathcal{L}_{min}^{n,n'}(c)=142\,ns$ & \multirow{2}{*}{$n,n'\in\{sw_1$, $mfn_1$/$mfn_1^{vm_n}\}$} & \multirow{2}{*}{all} \\
		$\mathcal{L}_{max}^{n,n'}(ma)+\mathcal{L}_{max}^{n,n'}(c)=182\,ns$ & & \\
		\hline
		$\mathcal{L}_{min}^{n,n'}(ma)+\mathcal{L}_{min}^{n,n'}(c)=134\,ns$ & \multirow{2}{*}{$n,n'\in\{sw_1$, $mfn_3$/$mn\}$} & \multirow{2}{*}{all} \\
		$\mathcal{L}_{max}^{n,n'}(ma)+\mathcal{L}_{max}^{n,n'}(c)=175\,ns$ & & \\
		\hline
		$\mathcal{L}_{min}^{n,n'}(ma)+\mathcal{L}_{min}^{n,n'}(c)=462\,ns$ & \multirow{2}{*}{$n,n'\in\{sw_1$, $cs_1\}$} & \multirow{2}{*}{all} \\
		$\mathcal{L}_{max}^{n,n'}(ma)+\mathcal{L}_{max}^{n,n'}(c)=506\,ns$ & & \\
		\hline
		$\mathcal{L}_{min}^{n,n'}(ma)+\mathcal{L}_{min}^{n,n'}(c)=441\,ns$ & \multirow{2}{*}{$n,n'\in\{sw_2$, $cs_1\}$} & \multirow{2}{*}{all} \\
		$\mathcal{L}_{max}^{n,n'}(ma)+\mathcal{L}_{max}^{n,n'}(c)=502\,ns$ & & \\
	\end{tabularx}
	\label{table:latency_substitutions}
\end{table}

We decided to attempt the latter and measure the drift rate. For this measurement, we ran Ubuntu~20.04.1 with full preemption Linux kernel v4.19.72 on both nodes as depicted in Figure~\ref{fig:experiment_configuration_baseline}. After initial synchronization of the Intel I210 NIC's in both nodes using LinuxPTP, we let its clocks drift apart for one hour and measured the time offset, which yielded $85279\,\mu s$. This results in a rough estimate of the maximum drift rate between the Intel I210 NICs of $r_{max}\approxeq1.184*10^{-8}$. We are aware that this value is only an approximation and does not reflect the drift rate between the TSCs of the edge computing devices when it comes to the physical clock that drives our \lstinline|CLOCK_SYNCTIME| as discussed in Section~\ref{ssec:dependent_clock_paradigm}. However, we consider the drift between the two Intel NIC's clock to indicate the magnitude of the actual maximum drift rate. Furthermore, we argue that for our experiments, the resulting drift offset $\Gamma=2\cdot 125\,ms\cdot 1.184\cdot 10^{-8}=2.96\,ns$ of a few nanoseconds is neglectable compared to the degradation of clock synchronization due to the reading error.

\subsubsection{Reading Error}
The reading error $\mathcal{E}$ of IEEE\,802.1AS is given by the difference of the maximum $d_{max}$ and minimum $d_{min}$ message delay encountered in a distributed real-time system (c.f.~Definition~\ref{def:reading_error_ieee802.1as}). Let's start off with quantifying all latencies on the paths between the grandmaster $gm$ and nodes $mfn_1$ and $mfn_2$ in order to derive the reading error for the native setup as depicted in Figure~\ref{fig:experiment_configuration_baseline}. In this case, $E_G:=\{E_{gm,mfn_1},E_{gm,mfn_2}\}$ is the set of paths from the grandmaster $gm$ to nodes $mfn_1$ and $mfn_2$. The minimum $d_{min}$ and maximum $d_{max}$ message delays are then given by:
\begin{align}
d_{min}=\min(d_{min}^{gm,mfn_1},d_{min}^{gm,mfn_2})\hspace{1cm}\text{ and }\hspace{1cm}d_{max}=\max(d_{max}^{gm,mfn_1},d_{max}^{gm,mfn_2})
\end{align}

We pick $d_{max}^{gm,mfn_2}$ as an example of how to calculate the minimum or maximum path message delay using Definition~\ref{def:min_max_message_delays_path}. The path between the grandmaster $gm$ and node $mfn_2$ is given by $E_{gm,mfn_2}:=\{(gm,sw_1),(sw_1,mfn_2)\}$. As a result, for the maximum path message delay $d_{max}^{gm,mfn_2}$, we obtain:
\begin{align}
d_{max}^{gm,mfn_2}&=d_{max}^{gm,sw_1}+d_{max}^{sw_1,mfn_2}\nonumber\\
&=\mathcal{L}_{max}^{gm}(txts)+\mathcal{L}_{max}^{gm,sw_1}(ma)+\mathcal{L}_{max}^{gm,sw_1}(c)+\mathcal{L}_{max}^{sw_1}(rxts)\nonumber\\
&\hspace{0.3cm} +\mathcal{L}_{max}^{sw_1}(txts)+\mathcal{L}_{max}^{sw_1,mfn_2}(ma)+\mathcal{L}_{max}^{sw_1,mfn_2}(c)+\mathcal{L}_{max}^{mfn_2}(rxts)
\end{align}
The remaining path message delays $d_{min}^{gm,mfn_1},d_{min}^{gm,mfn_2},$ and $d_{max}^{gm,mfn_1}$ can be calculated accordingly.

In the virtualized setups, the respective clock synchronization VMs replace the native edge computing devices in the paths from the grandmaster $gm$ to the leaf nodes. For better readability, we denote the clock synchronization VMs with $cs_1:=mfn_1^{vm_1}$ and $cs_2:=mfn_2^{vm_1}$. The paths from the grandmaster to the leaf nodes are now given by $E_G:=\{E_{gm,cs_1},E_{gm,cs_2}\}$ with $E_{gm,cs_1}:=\{(gm,sw_1),(sw_1,cs_1)\}$ and $E_{gm,cs_2}:=\{(gm,sw_2),(sw_2,cs_2)\}$ as depicted in Figure~\ref{fig:experiment_configuration}. The minimum and maximum message delays and the reading error can be calculated as demonstrated for the native setup.	

\subsection{Clock Synchronization Precision Measurement}
\label{ssec:clock_sync_precision_measurement}
In general, we measure the precision of clock synchronization in a distributed system by timestamping the occurrence of an external event on all nodes of the distributed system and compare the timestamps. In our setup, a UDP multicast packet sent every second from the measurement node ($mn$) to all nodes of our distributed system acts as an external event. The worst-case offset of two node's reception timestamps of the same UDP multicast packet then yields the clock synchronization precision. However, this measurement suffers from an inherent measurement error $\gamma$ that originates, just as the reading error determines clock synchronization precision, from variable path message delays.

For the native baseline experiment with hardware timestamping, this measurement error boils down to a variable minimum and maximum latency of the used networking hardware analog to the minimum and maximum latency constituting to the reading error between the measurement node $mn$ and the respective leaf nodes $mfn_1$ and $mfn_2$. As a result, the expected measurement error of the native setup $\gamma$ is given by the difference of the maximum and minimum path message delay on the paths $E_{mn}$ between the measurement node $mn$ and the nodes $mfn_1$ and $mfn_2$:
\begin{equation}
\gamma=\max_{\forall E_{n,n'}\in E_{mn}}(d_{max}^{n,n'})-\min_{\forall E_{n,n'}\in E_{mn}}(d_{min}^{n,n'})
\end{equation}
Using the minimum and maximum latencies provided in Table~\ref{table:latency_substitutions} we calculate an inherent measurement error of $\gamma=321\,ns$ for the native setup. This measurement error can cause isolated data points of the measured clock synchronization precision to exceed the upper bound calculated using our formal model. We could counteract this by simply including the measurement error in the upper bound. However, this would be semantically incorrect since the measurement of clock synchronization precision and the distributed execution of the clock synchronization protocol are two independent activities. Therefore, we decided to plot the measurement error $\gamma$ instead and acknowledge the possibility of isolated data points that can exceed the upper bound.

\begin{figure}[t]
	\subfigure[Virtualization measurement error $\gamma_{virt}$ of $mfn_1^{vm_4}$]{
		\includegraphics[width=0.5\textwidth]{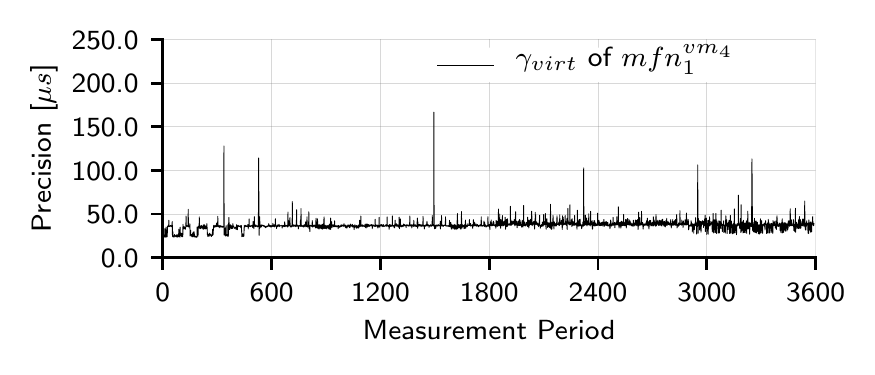}
		\label{fig:bvt_mfn1_measurement_error}
	}\hfill
	\subfigure[Virtualization measurement error $\gamma_{virt}$ of $mfn_2^{vm_4}$]{
		\includegraphics[width=0.5\textwidth]{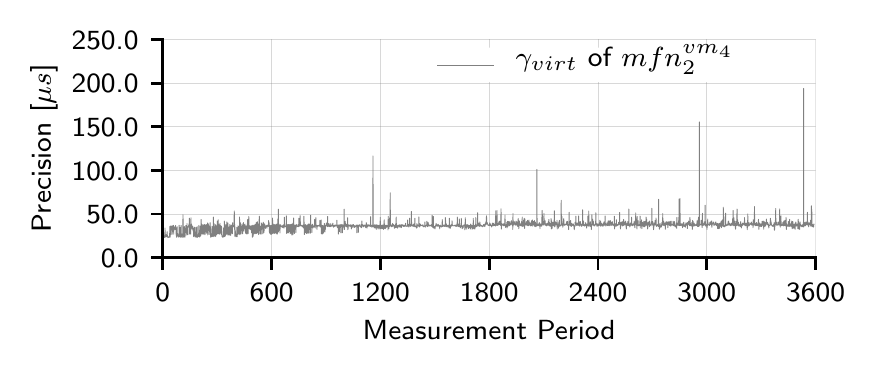}
		\label{fig:bvt_mfn2_measurement_error}
	}
	\subfigure[Virtualization measurement error $\gamma_{virt}$ of $mfn_1^{vm_3}$]{
		\includegraphics[width=0.5\textwidth]{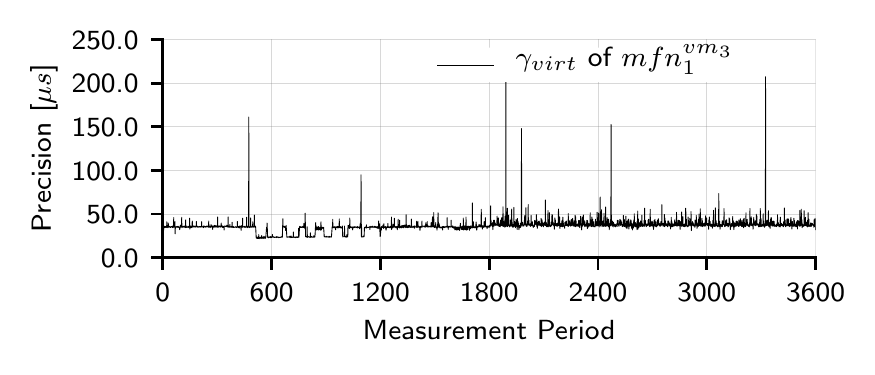}
		\label{fig:vmexitless_mfn1_measurement_error}
	}\hfill
	\subfigure[Virtualization measurement error $\gamma_{virt}$ of $mfn_2^{vm_3}$]{
		\includegraphics[width=0.5\textwidth]{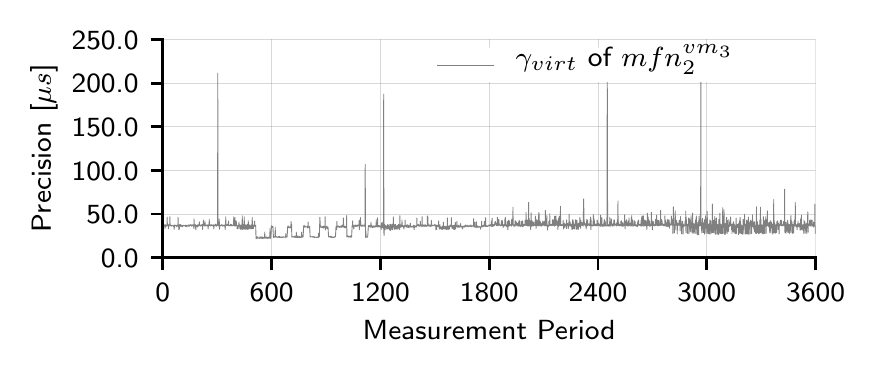}
		\label{fig:vmexitless_mfn2_measurement_error}
	}
	\caption{Showing the virtualization measurement error due to hypervisor and guest OS latency. Figures~\ref{fig:bvt_mfn1_measurement_error}~and~\ref{fig:bvt_mfn2_measurement_error} show the measured values of our resource consolidating hypervisor setup for measurement VMs $mfn_1^{vm_4}$ and $mfn_2^{vm_4}$. Figures~\ref{fig:bvt_mfn1_measurement_error}~and~\ref{fig:bvt_mfn2_measurement_error} in turn show the measured values of our resource partitioning hypervisor setup for measurement VMs $mfn_1^{vm_3}$ and $mfn_2^{vm_3}$. We can see that the measurement error is independent of the particular hypervisor configuration since the respective measurement VMs are always isolated on processor core $3$.}
	\label{fig:virt_measurement_error}
\end{figure}

For the virtualized experiments, we introduce a \emph{measurement VM} that receives and timestamps measurement packets using the node global synchronized time \lstinline|CLOCK_SYNCTIME| provided by the clock synchronization VM. It is beneficial to utilize a dedicated measurement VM for clock synchronization precision measurement. This way, we are guaranteed to measure the precision of the synchronized time provided by \lstinline|CLOCK_SYNCTIME| through the shared memory and not only the precision of the synchronized time available to the clock synchronization VM itself. The measurement VM $vm_n$ on the respective nodes $mfn_1$ and $mfn_2$ has exclusive access to $NIC_2$ (cf.~Figure~\ref{fig:experiment_configuration}) and is always pinned to processor core $3$. The measurement VM's kernel is modified to create a software timestamp from \lstinline|CLOCK_SYNCTIME| instead of using the hardware timestamp of the NIC for incoming packets. As a result, in addition to the inherent measurement error $\gamma$, due to path message delays, the clock synchronization precision measurement is subject to a measurement error $\gamma_{virt}$ stemming from variable guest OS and hypervisor latencies (c.f.~Section~\ref{ssec:degradation_of_clock_sync_precision}). However, in contrast to the inherent measurement error $\gamma$, we can exactly determine and account for the measurement error $\gamma_{virt}$ of each measurement packet. For that purpose, we utilize the physical NIC $NIC_2$ in the measurement VM and synchronize this NIC's clock to \lstinline|CLOCK_SYNCTIME| using \lstinline|phc2sys|. Now, we can take a hardware timestamp of an incoming measurement packet and a software timestamp using \lstinline|CLOCK_SYNCTIME|. The difference between the hardware and the software timestamp quantifies the measurement error $\gamma_{virt}$ precisely, allowing us to account for it in every measurement by subtracting $\gamma_{virt}$ from the measured clock synchronization precision $\Pi^*$. Note that this approach does not trim or transform the measured clock synchronization precision since any inaccuracy or degradation present in \lstinline|CLOCK_SYNCTIME| propagates to the NIC's clock and yields a poor clock synchronization precision measurement.

In summary, the clock synchronization precision $\Pi^*$ measured by the native nodes $mfn_1$ and $mfn_2$, or their respective measurement VMs in a virtualized setup, can deviate from the actual clock synchronization precision $\Pi$ by an inherent measurement error $\gamma$ so that it always is $\Pi^*\le \Pi + \gamma$ whereas $\Pi$ is the upper bound of clock synchronization precision derived from our model.

\subsection{Experimental Results}
For each experiment, to begin with, we substitute the minimum and maximum latencies in the formulas of the path message delay in order to derive the reading error and with it the upper bound on clock synchronization precision according to the formal model. We summarize all used minimum and maximum latency values in Table~\ref{table:latency_substitutions}.

\begin{description}[style=unboxed,leftmargin=0cm]
	\item[Native Setup with Hardware Timestamping] In case of the native setup with hardware timestamping, we can substitute the minimum and maximum transmission and reception timestamp latency for all nodes with the values found in the data sheet of the Intel I210 NIC~\cite{i210_data_sheet} yielding $\mathcal{L}_{min}^n(txts)=984\,ns$, $\mathcal{L}_{max}^n(txts)=1024\,ns$ and $\mathcal{L}_{min}^n(rxts)=2148\,ns$, $\mathcal{L}_{max}^n(rxts)=2228\,ns$ for $n\in\{sw_1,sw_2,sw_3,mfn_1,mfn_2,mfn_3\}$. We must note that the data sheet only provides values for $100\,Mb/s$ link speed thus we configured all NICs to operate at $100\,Mb/s$ to ensure validity of the provided timing characteristics. For the substitution of the minimum and maximum medium access time latency and the minimum and maximum transmission latency on the cable, we use the minimum and maximum propagation delays between adjacent nodes as measured by LinuxPTP for the execution of gPTP. For example, we measure a minimum and maximum latency of $\mathcal{L}_{min}^{gm,sw_1}(ma)+\mathcal{L}_{min}^{gm,sw_1}(c)=261\,ns$ and $\mathcal{L}_{max}^{gm,sw_1}(ma)+\mathcal{L}_{max}^{gm,sw_1}(c)=286\,ns$ between the grandmaster $gm$ and switch $sw_1$.
	\begin{figure}[t]
		\subfigure[Clock sync. precision, native HWTS]{
			\includegraphics[width=0.5\textwidth]{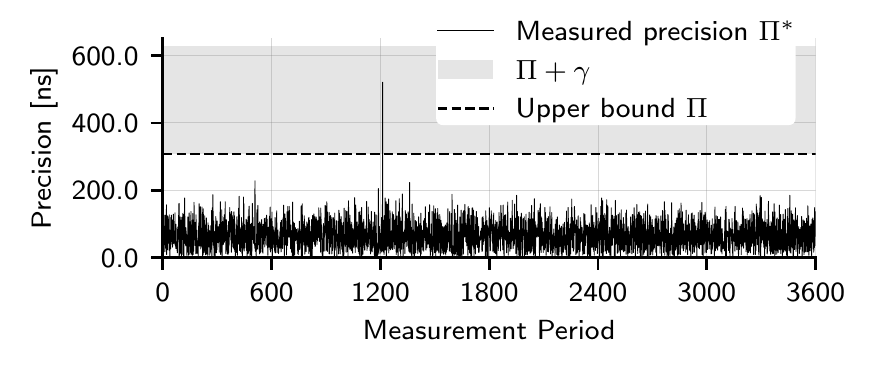}
			\label{fig:native_hwts_precision}
		}\hfill
		\subfigure[Distribution of values, native HWTS]{
			\includegraphics[width=0.5\textwidth]{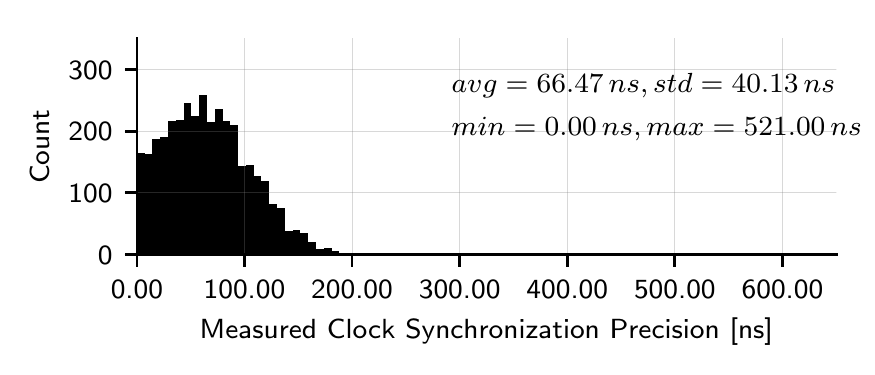}
			\label{fig:native_hwts_histogram}
		}
		\caption{Showing the upper bound of clock synchronization precision as derived from our formal model (dashed) and the clock synchronization precision measured in the native setup (black) with hardware timestamping (HWTS) in (a) and (b). For the measured clock synchronization precision in (a) we can see a single outlier that falls within the measurement error $\gamma=321\,ns$~(c.f.~Section~\ref{ssec:clock_sync_precision_measurement}).}
	\end{figure}
	After substituting all values, we find minimum and maximum path message delays between the grandmaster $gm$ and nodes $mfn_1$ and $mfn_2$ of:
	\begin{align}
	d_{min}^{gm,mfn_1}=6667\,ns \hspace{1cm}\text{ and }\hspace{1cm} d_{max}^{gm,mfn_1}=6972\,ns\nonumber\\
	d_{min}^{gm,mfn_2}=6672\,ns \hspace{1cm}\text{ and }\hspace{1cm} d_{max}^{gm,mfn_2}=6969\,ns
	\end{align}
	As a result, we obtain a reading error of $\mathcal{E}=6972\,ns-6667\,ns=305\,ns$. Together with the drift offset this yields an upper bound of clock synchronization precision of $\Pi=307.96\,ns$.
	
	Furthermore, using the values in Table~\ref{table:latency_substitutions}, we find an inherent measurement delay of $\gamma=321\,ns$ as discussed in Section~\ref{ssec:clock_sync_precision_measurement}. In Figure~\ref{fig:native_hwts_precision}, we plotted the measured clock synchronization precision for the native setup with hardware timestamping against the upper bound derived from our model. Furthermore, Figure~\ref{fig:native_hwts_histogram} illustrates the distribution of the measured clock synchronization precision showing an average of $66.47\,ns$ with a standard deviation of $40.13\,ns$ and a maximum of $521\,ns$.

	\item[ACRN with BVT Scheduling and Hardware Timestamping] In the resource consolidating hypervisor setup, we configured ACRN to schedule four VMs $vm_0$, $vm_1$, $vm_2$, and $vm_3$ according to BVT on cores $0$ to $2$ whereas $vm_4$, the measurement VM, is pinned to core~$3$. After the experiment ran for 30 minutes, we automatically started the execution of \lstinline|stress-ng|\footnote{\url{https://github.com/ColinIanKing/stress-ng}} with eight CPU workers each on VMs $vm_0$, $vm_2$, and $vm_3$ in order to test how processor load affects clock synchronization in $vm_1$ due to increased vCPU scheduling.
	\begin{figure}
		\subfigure[Clock sync. precision, ACRN BVT HWTS]{
			\includegraphics[width=0.5\textwidth]{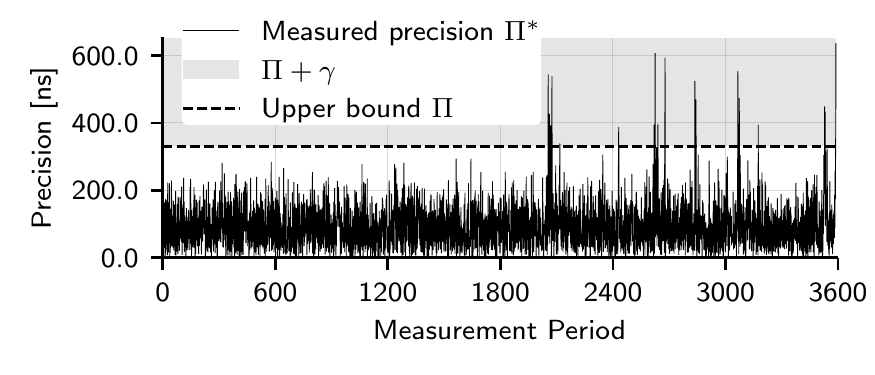}
			\label{fig:acrn_bvt_hwts_precision}
		}\hfill
		\subfigure[Distribution of values, ACRN BVT HWTS]{
			\includegraphics[width=0.5\textwidth]{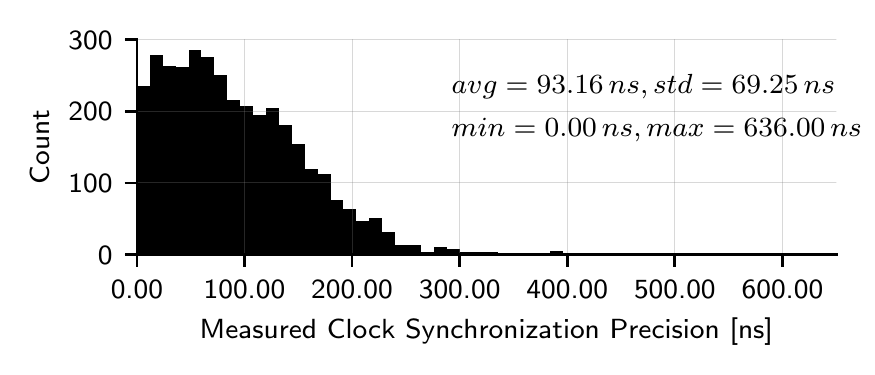}
			\label{fig:acrn_bvt_hwts_histogram}
		}
		\caption{Showing the upper bound of clock synchronization precision as derived from our formal model (dashed) and the clock synchronization precision measured in the ACRN BVT setup (black) with hardware timestamping (HWTS) in (a) and (b). The expected measurement error equals to $\gamma=321\,ns$~(c.f.~Section~\ref{ssec:clock_sync_precision_measurement}) the same as in the native setup before.}
	\end{figure}
	When using HWTS with the clock synchronization VM, we bypass all hypervisor latencies thus obtaining $\mathcal{L}_{min}(hv)=\mathcal{L}_{min}(vme)+\mathcal{L}_{min}(sched)=0\,s$ and $\mathcal{L}_{max}(hv)=\mathcal{L}_{max}(vme)+\mathcal{L}_{max}(sched)=0\,s$. This results in a reading error of $\mathcal{E}=327\,ns$ that only slightly differs from the reading error of the native setup with hardware timestamping due to the adapted network topology in the virtualized setups (c.f.~Figure~\ref{fig:experiment_configuration_baseline}~and~Figure~\ref{fig:experiment_configuration}). Therefore, we obtain an upper bound of clock synchronization precision of $\Pi=329.96\,ns$.
	
	In Figures~\ref{fig:acrn_bvt_hwts_precision}~and~\ref{fig:acrn_bvt_hwts_histogram}, we plotted the measured clock synchronization precision and its upper bound for ACRN in a resource consolidating configuration with hardware timestamping. For completeness, Figures~\ref{fig:bvt_mfn1_measurement_error}~and~\ref{fig:bvt_mfn2_measurement_error} show the measurement error for measurement VMs $mfn_1^{vm_4}$ and $mfn_2^{vm_4}$ due to hypervisor and guest OS latencies, which we have described in Section~\ref{ssec:clock_sync_precision_measurement}, and that have already been accounted for in Figures~\ref{fig:acrn_bvt_hwts_precision}~and~\ref{fig:acrn_bvt_hwts_histogram}. We note that for the first $1800$ measurement periods, the clock synchronization precision is constantly lower than the given upper bound. From measurement period $1800$ onward, with the activation of the \lstinline|stress-ng| CPU workers, we observe an increase of outliers that still lie within the expected measurement error.
	
	\item[ACRN with Pinned VCPUs and Hardware Timestamping] For the resource partitioning setup, we configured ACRN to pin VMs to processor cores so that $vm_0$ is pinned to core $0$, $vm_1$ to core $1$, and so forth. As before, after the experiment ran for 30 minutes, we automatically started the execution of \lstinline|stress-ng| with eight CPU workers each on VMs $vm_0$ and $vm_2$ in order to test how processor load affects clock synchronization in $vm_1$ in a partitioned setup.
	\begin{figure}[t]
		\subfigure[Clock sync. precision, ACRN pinned HWTS]{
			\includegraphics[width=0.5\textwidth]{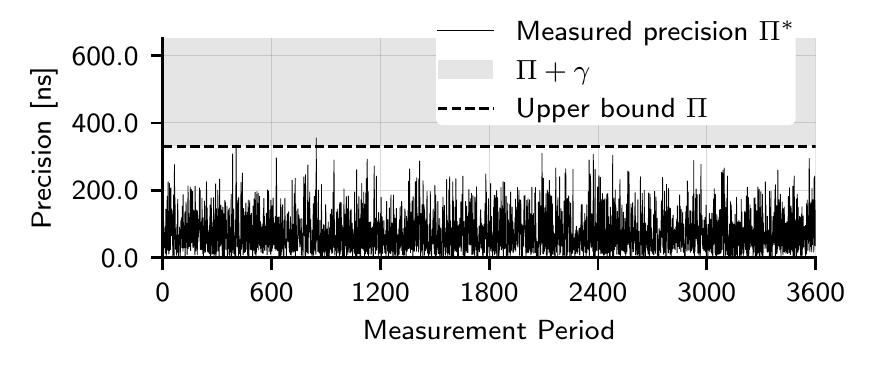}
			\label{fig:acrn_vmexitless_hwts_precision}
		}\hfill
		\subfigure[Distribution of values, ACRN pinned HWTS]{
			\includegraphics[width=0.5\textwidth]{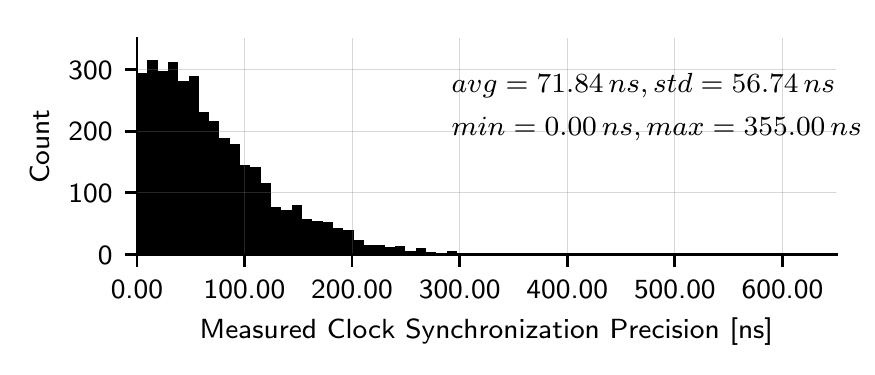}
			\label{fig:acrn_vmexitless_hwts_histogram}
		}
		\caption{Showing the upper bound of clock synchronization precision as derived from our formal model (dashed) and the clock synchronization precision measured in the ACRN VM exit-less setup (black) with hardware timestamping (HWTS) in (a) and (b). The expected measurement error equals to $\gamma=321\,ns$~(c.f.~Section~\ref{ssec:clock_sync_precision_measurement}) the same as in the previous setups.}
	\end{figure}
	We again bypass all hypervisor latencies by using a pass-through NIC and the dependent clock paradigm yielding a reading error of $\mathcal{E}=327\,ns$ that approximates the reading error of the native setup with hardware timestamping and accordingly also guarantees a precision $\Pi=329.96\,ns$ of the same magnitude. 
	
	Figures~\ref{fig:acrn_vmexitless_hwts_precision}~and~\ref{fig:acrn_vmexitless_hwts_histogram}, show the corresponding measured clock synchronization precision and the upper bound. We have again plotted the measurement error due to hypervisor and guest OS latencies for measurement VMs $mfn_1^{vm_3}$ and $mfn_2^{vm_3}$ in Figures~\ref{fig:vmexitless_mfn1_measurement_error}~and~\ref{fig:vmexitless_mfn2_measurement_error} respectively.
\end{description}

\subsection{Discussion}
Our experimental results show that it is feasible to achieve almost native clock synchronization precision with a type-1 hypervisor that implements the dependent clock paradigm. This is true for both a resource partitioning as well as a resource consolidating hypervisor configuration. However, we do note a slight degradation of clock synchronization precision in Figure~\ref{fig:acrn_bvt_hwts_precision} compared to Figure~\ref{fig:acrn_vmexitless_hwts_precision} when we synthetically create processor load using \lstinline|stress-ng| starting with measurement period $1800$. Even though the upper bound of clock synchronization is not violated outside the expected inherent measurement error $\gamma$, we have to investigate what causes this degradation and how we can cope with it. A possibility is that the performance of \lstinline|phc2sys|, which is synchronizing \lstinline|CLOCK_SYNCTIME| to the NIC's clock, is the source of degradation since our formal model does not capture it.

The upper bound of clock synchronization precision derived using our formalization holds if we consider the inherent measurement error $\gamma$. We can observe single values for all three experiments that supposedly violate the upper bound of clock synchronization precision. However, these observations always fall within the expected measurement error. We also want to note that there is still the possibility to account for the measurement error in the upper bound of clock synchronization precision, in which case the upper bound always holds. We have discussed in Section~\ref{ssec:clock_sync_precision_measurement} why we have disregarded this presentation and decided to plot measurement error and clock synchronization precision separately.

Finally, we admit that utilizing a physical NIC with hardware timestamping to bypass hypervisor and OS latency can be considered common practice so that the almost native clock synchronization performance seen in our experiments might not come as a surprise. However, we want to point out that there is a lack of publicly available research and prototypes that showcase this common practice. Furthermore, with our detailed analysis of clock synchronization in virtualized distributed real-time systems in Section~\ref{ssec:degradation_of_clock_sync_precision} we provide the theoretical framework that supports and predicts these practical results. Finally, identifying the measurement error due to hypervisor and guest OS latency of the measurement VMs as illustrated in Figure~\ref{fig:virt_measurement_error} that effectively reveals the full potential of the dependent clock paradigm in type-1 hypervisors is a direct result of the initial detailed analysis of the problem.

\section{Conclusion and Future Work}
The introduction of virtualization in time-triggered systems faces challenges when it comes to the integration of legacy applications and time-triggered communication and computation. A precondition for applying the time-triggered paradigm is the availability of a global time base that is established by the execution of a clock synchronization protocol. We introduce the notion of virtual clocks and provide a virtual clock condition that, if fulfilled, guarantees a virtual clock to be a good clock and thus is suited for clock synchronization in a virtualized distributed real-time system. After showing that discontinuous virtual clocks fulfill the virtual clock condition, we investigate how virtualization degrades clock synchronization in distributed real-time systems and quantify the relevant parameters. We use this formalization to derive an upper bound on clock synchronization precision of IEEE\,802.1AS in a virtualized distributed real-time system.

Our experimental results show that near-native clock synchronization precision using a type-1 hypervisor that implements the dependent clock paradigm is possible given that we account for the measurement error due to hypervisor and guest OS latency in the measurement VMs. To the authors' knowledge, this is the first prototype of a type-1 hypervisor that achieves sub-microsecond clock synchronization precision and additionally comes with a formalization supporting this claim. We see this as proof that the dependent clock paradigm can drive the scheduler of a time-triggered hypervisor in a virtualized distributed real-time system, enabling the virtualization of distributed control in many use cases and industries.

In future work, we will extend our formalization and implementation by fault tolerance measures and integrate a time-triggered VCPU scheduler that uses the synchronized time to schedule VCPUs according to a global task and network schedule.

\section*{Acknowledgment}
The research leading to these results has received funding from the European Union's Horizon 2020 research and innovation programme under the Marie Sklodowska-Curie grant agreement No.~764785, FORA -- Fog Computing for Robotics and Industrial Automation

\newpage

\appendix
\section{Appendix}
\subsection{Notation Summary}
\label{app:notation_summary}
\begin{table}[h!]
	\centering
	\caption{Summary of the notation used throughout the paper.}
	\begin{tabularx}{\textwidth}{cc|l}
		& Notation & Description \\
		\hline\hline
		\multirow{7}{*}{\rotatebox[origin=c]{90}{Section~\ref{sec:notion_of_time}}} & $ts(e)$ & Timestamp of event $e$ in the reference time\\
		& $tl^k(e)$ & Timestamp of event $e$ in the local time $tl^k$ of node $k$ \\
		& $l$ & Microtick of a local physical clock \\
		& $C^k$ & Local physical clock of node $k$ \\
		& $g^k$ & Granularity of the clock $C^k$ of node $k$ \\
		& $r^k_{l,l+1}$ & Clock drift of node $k$ between two consecutive microtick $l$ and $l+1$ \\
		& $r_{max}$ & Maximum drift rate\\
		\hline
		
		\multirow{4}{*}{\rotatebox[origin=c]{90}{Section~\ref{ssec:virtual_clocks}}} & $\mathcal{V}^k$ & Set of VMs hosted on physical node $k$ \\
		& $C^{k,j}$ & Local virtual clock of VM $j$ running on physical node $k$\\
		& $tv^{k,j}(e)$ & Timestamp of event $e$ in the virtual time $tv^{k,j}$ of VM $j$ hosted on node $k$\\
		& $\nu^{k,j}$ & Virtual microtick of VM $j$ hosted on physical node $k$\\
		\hline
		
		\multirow{4}{*}{\rotatebox[origin=c]{90}{Section~\ref{ssec:continuous_discontinuous_clocks}}} & $o(l)$ & The clock offset of a virtual clock from its local clock at microtick $l$\\
		& $\delta$ & Generic duration of the preemption of a VM by the hypervisor\\
		& $P_l$ & Set of the durations of all preemptions of a VM up to microtick $l$ of the host\\
		& $p_i\in P_l$ & Duration of the $i$-th preemption of a VM\\
		\hline
		
		\multirow{3}{*}{\rotatebox[origin=c]{90}{Section~\ref{ssec:clock_sync_precision}}} & $D$ & Union set of all nodes, virtual and physical, of a virtualized distributed system\\
		& $G$ & Set of all good clocks in a virtualized distributed system $D$ \\
		& $\Pi$ & Precision of clock synchronization (c.f.~Definition~\ref{def:clock_synchronization_precision}) \\
		\hline\hline

		\multirow{5}{*}{\rotatebox[origin=c]{90}{Section~\ref{sec:clock_sync_in_virtualized_distributed_rt_systems}}} & $S$ & Resynchronization period of clock synchronization\\
		& $s\in\mathbb{N}_0$ & Specific resynchronization period\\
		& $m_s^{n,n'}$ & Clock synchronization message from node $n$ to $n'$ in period $s$\\
		& $tx^{n'}_s$ & Transmit event of a synchronization message in period $s$ to receiver $n'\in D$\\
		& $rx^{n}_s$ & Reception event of a synchronization message in period $s$ from sender $n\in D$\\
		\hline
		
		\multirow{8}{*}{\rotatebox[origin=c]{90}{Sections~\ref{ssec:reading_error}~and~\ref{ssec:synchronization_condition}}} & $c_s^{n,n'}$ & The correction term between node $n$ and $n'$ (c.f.~Definition~\ref{def:correction_term})\\
		& $e_s^{n,n'}$ & Estimated message delay between nodes $n$ and $n'$ during period $s$ \\
		& $a_s^{n,n'}$ & Actual message delay between nodes $n$ and $n'$ during period $s$ \\
		& $\Delta_s^{n,n'}$ & Actual time difference between clock $C^n$ and clock $C^{n'}$\\
		& $\epsilon_s^{n,n'}$ & Reading delay between nodes $n$ and $n'$ during period $s$\\
		& $d_{min}$ & Minimum message delay in a virtualized distributed system $D$\\
		& $d_{max}$ & Maximum message delay in a virtualized distributed system $D$\\
		& $\mathcal{E}$ & Reading error in a virtualized distributed system $D$ (c.f.~Definition~\ref{def:reading_error})\\
		& $\Gamma$ & Drift offset in a virtualized distributed system $D$\\
		
		\hline
		\multirow{7}{*}{\rotatebox[origin=c]{90}{Section~\ref{ssec:degradation_of_clock_sync_precision}}} & $d_{min}^{k,k'}$ & Minimum (path) message delay between nodes $k$ and $k'$\\
		& $d_{max}^{k,k'}$ & Maximum (path) message delay between nodes $k$ and $k'$\\
		& $\mathcal{L}_{min}^k(x)$ & Minimum latency introduced on node $k$ by component $x$\\
		& $\mathcal{L}_{max}^k(x)$ & Maximum latency introduced on node $k$ by component $x$\\
		& $E_{n,n'}$ & Path from a node $n$ to a node $n'$\\
		& $(k,k')\in E_{n,n'}$ & Link between adjacent nodes $k,k'\in D$ on path $E_{n,n'}$\\
		& $E_D$ & Set of all possible paths between two nodes in a virt. distributed system $D$\\
	\end{tabularx}
\end{table}

\bibliographystyle{abbrvnat}
\bibliography{references} 
\end{document}